\documentclass[final]{siamltex}

\usepackage{amsmath,amssymb,epsfig,subfigure,overpic}
\usepackage{framed,color}

\newcommand{\dx}{\mathrm{d}x}

\definecolor{forestgreen}{rgb}{0.13, 0.55, 0.13}
\definecolor{maroon}{rgb}{0.75,0.19,0.38}
\definecolor{grey}{rgb}{0.6, 0.6, 0.6}

\title{Multiscale reaction-diffusion algorithms: \\
PDE-assisted Brownian dynamics}

% The thanks line in the title should be filled in if there is
% any support acknowledgement for the overall work to be included
% This \thanks is also used for the received by date info, but
% authors are not expected to provide this.

\author{Benjamin Franz$^1$\!\!\! \and  
        Mark B. Flegg$^1$\!\!\! \and 
        S.~Jonathan Chapman$^1$\!\!\! \and \;\\ \quad
	Radek Erban$^{1,2}$}

\begin{document}

\maketitle

\footnotetext[1]{Mathematical Institute, 
	University of Oxford, 24-29 St Giles', Oxford OX1 3LB, United Kingdom}
\footnotetext[2]{Corresponding author: erban@maths.ox.ac.uk}
	
\begin{abstract}
    Two algorithms that combine Brownian dynamics (BD) simulations with
    mean-field partial differential equations (PDEs) are presented.
    This PDE-assisted Brownian dynamics (PBD) methodology provides exact 
    particle tracking data in parts of the 
    domain, whilst making use of a mean-field reaction-diffusion PDE 
    description elsewhere. The first PBD algorithm couples BD simulations
    with PDEs by  randomly creating  new particles close to the interface which 
    partitions the domain and by reincorporating particles into the 
    continuum PDE-description when they cross the interface. The second
    PBD algorithm introduces an overlap region, where both descriptions exist
    in parallel. It is shown that to accurately compute  variances
    using the PBD simulation requires the overlap region. Advantages of both
    PBD approaches are discussed and illustrative numerical examples are 
    presented. 
\end{abstract}

\begin{keywords}
    reaction-diffusion systems, Brownian dynamics, multiscale simulation
\end{keywords}

\begin{AMS}
35R60, 60J65, 92C40
\end{AMS}

\pagestyle{myheadings}
\thispagestyle{plain}

\section{Introduction} \label{sec:introduction}
Spatial reaction-diffusion models have been widely used for 
the description of biological systems \cite{Murray:MB2}.
Often continuum approaches, written in the form of reaction-diffusion
partial differential equations (PDEs), are used
due to their simplicity and the vast number of ready-to-use 
numerical solvers. However, many biological effects cannot 
be fully described by  deterministic PDE-based models. 
This is because a deterministic model requires large 
copy numbers of molecules to minimize the relative fluctuation of the spatial 
concentration. If low copy numbers are present in a biological
system \cite{Lipkow:2005:SDP, Takahashi:2010:STC}, then
stochastic models such as  mesoscopic compartment-based algorithms
\cite{Hattne:2005:SRD,Engblom:2009:SSR} or trajectory tracking 
(Brownian dynamics) methods, may be deployed 
\cite{Andrews:2004:SSC,vanZon:2005:SBN}. 

In many situations individual trajectories are important only 
in certain parts of the domain, whilst in the remainder of the 
domain a coarser, less detailed, method can be used \cite{Flegg:2012:TRM}. 
This is the case, for example, in the modelling of ion-channels 
\cite{Moy:2000:TCT}. Ions pass through a channel in 
single file and an individual-based model has to be used
to accurately compute the discrete, stochastic, current in the 
channel \cite{Chen:2012:BDM}. The positions of individual ions are 
less important away from the channel where copy numbers may be very 
large (rendering a detailed Brownian dynamics description infeasible) 
\cite{Corry:2000:TCT}. Another example is the stochastic 
reaction-diffusion modelling of filopodia which are dynamic finger-like 
protrusions used by eukaryotic motile cells to probe their 
environment and help guide cell motility \cite{Zhuravlev:2009:MNC}.
These relatively small protrusions are connected to a larger cytosol 
compartment. If a modeller is interested to understand the dynamics
of filopodia, then there is a potential to decrease the computational
cost of simulations by using a coarser model in the cytosol.
In both examples, it is important to understand how models with
a different level of detail can be used in different parts of the
computational domain \cite{Flegg:2012:TRM}.

In this paper, we develop algorithms that calculate Brownian dynamics (BD) 
paths in a desired part of the domain, whilst using a continuum PDE-based 
model in the remainder. This PDE-assisted Brownian dynamics (PBD) methodology
has the advantage that efficient methods for solving PDEs can be used for 
large parts of the modelled domain, whilst BD data is available in other 
areas where required. The main goal of the PBD methodology is to get the same 
statistics (means and variances) in the BD subdomain as we would get
if we we were able to use BD simulations in the whole domain. 
In particular, the correct coupling between the 
two parts of the domain is of vital importance for the accuracy of a 
PBD algorithm. 

The paper is organised as follows. Section~\ref{sec:formulation} states, 
in mathematical terms, the requirements for the developed algorithms 
and introduces the notation used throughout the remainder of the paper. 
We then introduce the first PBD algorithm for a pure diffusion system in 
Section~\ref{sec:diffusion}, where we also explore the complications 
of this algorithm. In Section~\ref{sec:probation} we present the
second PBD algorithm which provides more accurate computations. 
In Section~\ref{sec:reactions} we investigate issues relating 
to the introduction of reactions into the system and present several 
computational examples. 

\section{Problem formulation} \label{sec:formulation}

Consider a general Brownian dynamics reaction-diffusion simulation 
with $M$ chemical species in the (open) domain $\Omega \subset {\mathbb R}^3$. 
We denote by $n_j(\mathbf{x}, t)$, $j=1,2,\dots,M,$ 
the expected spatio-temporal concentration of the $j$-th chemical 
species at the 
position $\mathbf{x}$ and time $t$ over our domain $\Omega$. 
The approximate mean-field reaction-diffusion PDEs for the 
time evolution of concentrations can be written as follows
\begin{equation}\label{eq:reactdiff}
 \frac{\partial p_j}{\partial t} = D_j\, \Delta p_j + R_j(p_1,p_2, 
 \dots,p_M)\,, \qquad j=1,2,\dots,M,
\end{equation}
where $p_j \equiv p_j(\mathbf{x}, t) : \Omega \times [0,\infty) 
\to [0,\infty)$ is the mean-field approximation of $n_j$,
$D_j$ is the diffusion constant of the $j$-th chemical species
and $R_j : [0,\infty)^M \to {\mathbb R}$ represents the reaction terms. 

The goal of PBD algorithms is to couple macroscopic description 
(\ref{eq:reactdiff}) in the open subdomain $\Omega_P\subset\Omega$ with 
a stochastic BD simulation in the open subdomain $\Omega_B\subset\Omega$,
where the closures of $\Omega_B$ and $\Omega_P$ cover $\Omega$,
i.e.
\begin{equation}
\Omega \subset \overline{\Omega_B} \cup \overline{\Omega_P}.
\label{omegacoverage}
\end{equation}
In $\Omega_B$, we will consider BD trajectories of individual molecules,
i.e. the state of the microscopic subdomain 
$\Omega_B$ is defined by the number $N_B^{(j)}(t)$ of molecules of 
the $j$-th chemical species at time $t$ and their positions 
${\mathbf x}_i^{(j)}(t) \in \Omega_B$, $i=1,2,\dots, N_B^{(j)}(t)$, 
$j=1,2,\dots, M$. We denote by $I$ the interface between the subdomains
$\Omega_B$ and $\Omega_P$, namely
\begin{equation}
I = \overline{\Omega_B} \cap \overline{\Omega_P}.
\label{interfaceI}
\end{equation}
In this paper, we will investigate two cases: 

\medskip 

{\leftskip 1.2cm

\parindent -6.5mm
{\bf [A]} $\Omega_B$ and $\Omega_P$ do not overlap, i.e. 
          $\Omega_B \cap \Omega_P = \emptyset$;

\smallskip

{\bf [B]} there exists an overlap region where
          the PDE description and BD simulations exist in parallel, i.e.
          $\Omega_B \cap \Omega_P \not = \emptyset.$

\medskip
\par}

\noindent
The case [A] will lead to the PBD algorithm (A1)--(A5) presented
in Table \ref{alg:complete}. The case [B] is implemented
in the second PBD algorithm (B1)--(B5) which is presented in
Table \ref{alg:probation}. We will start our discussion with
the case [A] because it is less technical to implement
than the case [B].   

To simplify our presentation, we will consider that $\Omega$
is a ``narrow" three-dimen\-sional domain, and hence only consider 
the process mapped onto an effective one-dimensional domain 
$\Omega\subset\mathbb{R}$ by assuming that the system is well 
mixed in the other two dimensions. In particular, we have
$\Omega_B \subset \Omega\subset\mathbb{R}$ and the
state of the BD subdomain $\Omega_B$ will be described by
the $x$-coordinates of molecules which we will denote as
$x_i^{(j)}(t) \in \Omega_B$, $i=1,2,\dots, N_B^{(j)}(t)$,
$j=1,2,\dots, M$.
We simulate the system using finite time step 
$\Delta t > 0$, in which particles in $\Omega_B$ change their 
position according to the discretized version of the overdamped 
Langevin equation
\begin{equation}
\label{eq:BD}
x_i^{(j)}(t+\Delta t) = x_i^{(j)}(t) + \sqrt{2D_j\Delta t} \, \xi\,,
\end{equation}
where $\xi$ is a normally distributed random variable with 
zero mean and unit variance. Figure~\ref{fig:sketch} shows 
a sketch of the described system for one chemical species along with 
the notation used in the case [A].
    \begin{figure}
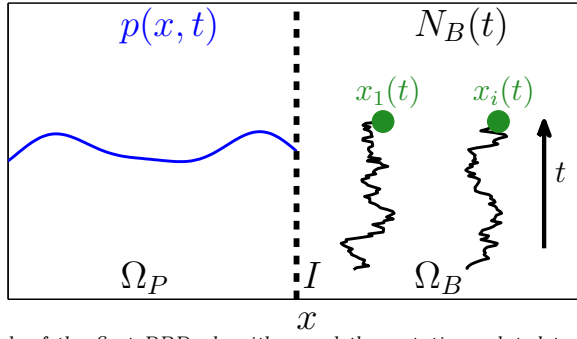

        \centering
        \begin{overpic}[width=0.6\textwidth]{./Figure2_1}
            \put(20, 2){\Large $\Omega_P$}
            \put(70, 2){\Large $\Omega_B$}
            \put(51, 2){\Large $I$}
            \put(20, 45){\color{blue}\Large $p(x, t)$}
            \put(70, 45){\Large $N_B(t)$}
            \put(60, 34){\color{forestgreen}\large $x_1(t)$}
            \put(80, 34){\color{forestgreen}\large $x_i(t)$}
            \put(94, 20){\large $t$}
            \put(50, -5){\Large $x$}
        \end{overpic}
\caption{Sketch of the first PBD algorithm and the notation related to it. 
In $\Omega_P$, molecules are described by their density distribution $p(x,t)$ 
and in the microscopic domain $\Omega_B$ described by the number $N_B(t)$ 
of molecules and their positions $x_i(t)$, $i=1,2,\dots,N_B(t)$. 
The interface between these domains is denoted $I$.}
\label{fig:sketch}
\end{figure}
Our goal is to construct PBD algorithms which will satisfy
the following two conditions:

\medskip

\noindent
{\bf Condition (C.1):} We require that the expected distribution 
of molecules in $\Omega_B \setminus \Omega_P$ match that of the 
expected distribution $n_j(x,t)$, i.e. the distribution which 
we would obtain if we used detailed BD simulations in the whole 
domain $\Omega$. In particular this means that for every set 
$A\subset\Omega_B \setminus \Omega_P$, 
the expected number of particles in $A$ at time $t>0$ has to satisfy
\begin{equation}
\label{eq:problem:mean:part}
\mathbb{E}\left[\left|
\left\{ x_i^{(j)}(t)\in A\;, i=1,\ldots,N_B^{(j)}(t)\right\}\right|
\right] = \int_A n_j(x, t) \dx\,,\qquad \forall A\subset
\Omega_B\setminus \Omega_P\,.
\end{equation}
The choice of an arbitrarily small but finite interval
$A=[x, x+\dx)$ for $x\in \Omega_B\setminus \Omega_P$ leads to an alternative 
formulation of this condition
\begin{equation}
\label{eq:problem:mean:alt}
\mathbb{E}\left[\left|
\left\{ x_i^{(j)}(t)\in [x, x+\dx)\;, i=1,\ldots,N_B^{(j)}(t)\right\}\right|
\right] = n_j(x, t) \dx\,,\qquad \forall x \in \Omega_B\setminus \Omega_P\,.
\end{equation}

\medskip

\noindent
{\bf Condition (C.2):} 
Whilst we aim to match the expected outcome of the stochastic 
simulation to $n(x, t)$, we also want the variances of 
molecule distribution in $\Omega_B\setminus \Omega_P$ to match that which 
would be expected if a BD simulation were to be performed 
over the entire domain $\Omega$.

\medskip

\noindent
In $\Omega_P\setminus \Omega_B$, the system is described by
the concentration vector $p_j$, $j=1,2,\dots,M$, which evolves
according to the PDE (\ref{eq:reactdiff}). This distribution, 
whilst continuous, is not strictly deterministic since it is coupled 
with the stochastic outcomes of the BD subdomain $\Omega_B$.
If the stochastic reaction-diffusion model only includes zero-order
or first-order reactions, then the mean-field PDE (\ref{eq:reactdiff})
describes the expected behaviour of stochastic models
\cite{Erban:2007:PGS}. In this case it is reasonable to require
the following additional condition.

\medskip

\noindent
{\bf Condition (C.3):} We require
\begin{equation}
\label{eq:problem:mean:cont}
\mathbb{E}\left[ p_j(x, t)
\right] = n_j(x, t)\,,\qquad 
\forall x \in \Omega_P\setminus \Omega_B\,,\ t>0\,.
\end{equation}

\medskip

\noindent
In the case [A], the conditions (C.1)--(C.3) can be simplified by
observing that $\Omega_B = \Omega_B \setminus \Omega_P$
and $\Omega_P = \Omega_P\setminus \Omega_B$.
In the case [B], we will also require that the PBD algorithm 
gives the correct mean distribution of molecules in 
the overlap region $O = \Omega_B \cap \Omega_P$, i.e.
\begin{equation}
\label{eq:problem:mean:cont:overlap}
\mathbb{E}\left[ p_j(x, t) \right] \dx
+
\mathbb{E}\left[\left|
\left\{ x_i^{(j)}(t)\in [x, x+\dx)\;, i=1,\ldots,N_B^{(j)}(t)\right\}\right|
\right]
= n_j(x, t) \dx
\end{equation}
for all $x \in O = \Omega_B \cap \Omega_P$ and $t > 0.$
 
\section{PBD simulation of diffusion} \label{sec:diffusion}
In this section, we explain our case [A] PBD algorithm 
(i.e. $\Omega_B \cap \Omega_P = \emptyset$) using 
a system of diffusing non-interacting
molecules of a single chemical species. Therefore, for all further discussions
in this and the following section we will drop the index $j$ 
representing the species. Then the macroscopic PDE 
(\ref{eq:reactdiff}) for this system becomes the diffusion
equation
\begin{equation}\label{eq:diffusion}
 \frac{\partial n}{\partial t} = D \frac{\partial^2 n}{\partial x^2}\,,
\end{equation}
where $n \equiv n(x,t): \Omega \times [0,\infty) \to [0,\infty)$
and $D$ is the diffusion constant.
We will consider the infinite domain $\Omega = \mathbb{R}$ for
simplicity. Without loss of generality, we assume that 
$\Omega_P = (-\infty, 0)$ and $\Omega_B = (0, \infty)$, i.e.
the internal boundary $I = \{0\}$ is situated at the origin $x=0$. 
In the case of diffusion only, the total number of molecules $N$ 
in the system is conserved, i.e.
\begin{equation}
N = \int_{\Omega} n(x,t) \,\dx, \qquad \mbox{for all} \; t\geq 0\,.
\label{conservationofmass}
\end{equation}
Hence, for the PBD algorithm the conservation of mass condition
takes the form
\begin{equation}\label{eq:consmass}
N = \int_{\Omega_P} p(x, t) \,\dx + N_B(t)\,, 
\qquad \mbox{for all} \; t\geq 0\,.
\end{equation}
Since the diffusing molecules are non-interacting, we can express 
Condition (C.2) in mathematical terms. If all particles start with the
same initial condition, then each particle has
the (identical) probability $p_A=\int_A n(x,t)\dx/N$ of being 
in a set $A$ at time $t$. Consequently, the expected number of 
particles in $A$ at time $t$ is $N p_A$ and the variance is
equal to $Np_A(1-p_A)$. Substituting $p_A = \int_A n(x,t) \dx/N$ 
and using (\ref{conservationofmass}), we get Condition 
(C.2) in the following form
\begin{equation}
\label{eq:problem:var}
\operatorname{var}\left[\left|
\left\{ x_i(t)\in A\;, i=1,\ldots,N_B(t)\right\}\right|
\right] = \int_A n(x,t) \dx \left(1 - \frac{\int_A n(x, t)\dx}{
\int_\Omega n(x,t)\dx}\right)\,,
\end{equation}
for all $A\subset\Omega_B$ and $t>0$.
Using again $A=[x, x+\dx)$, we obtain the alternative formulation
\begin{equation}
\label{eq:problem:var:alt}
\operatorname{var}\left[\left|
\left\{ x_i(t)\in [x, x + \dx)\;, i=1,\ldots,N_B(t)\right\}
\right| \right] =  n(x, t) \dx\,.
\end{equation} 

In Sections~\ref{subsec:update:cont} and \ref{subsec:update:BD},
we present one update step of the continuum and the particle-based 
simulations respectively, before the full PBD algorithm (A1)--(A5)
is formulated 
in Section~\ref{subsec:fullalgorithm}. In Section~\ref{subsec:problems}, 
we will discuss the accuracy of the algorithm with respect to 
the Conditions (C.1)--(C.3).

\subsection{Updating the PDE regime in $\Omega_P$} 
\label{subsec:update:cont}
At time $t$, we have the concentration $p(x,t)$ for 
$x \in \Omega_P$ and are aiming to calculate the concentration
$p(x,t+\Delta t)$ that corresponds to a realisation of one 
time step $\Delta t$ of the diffusion process \eqref{eq:diffusion}.
We therefore define the exact outcome of a diffusion step in 
the full domain $\Omega$ given initial data $p(x,t)$:
\begin{equation} \label{eq:def:ptilde}
\widetilde{p}(x, t+\Delta t) = \int_{\Omega_P} 
K(x-x', \Delta t) \, p(x',t) \, \dx'\,,
\end{equation}
where $K(x-x',\Delta t)$ is the diffusion kernel
\begin{equation}\label{eq:def:kernel}
K(\xi, \Delta t) = \frac{1}{\sqrt{4\pi D \Delta t}} 
\exp\left(- \frac{\xi^2}{4D \Delta t}\right)\,,
\end{equation}
and the function $\widetilde{p}(x,t)$ has support $\Omega$.
Using this process, a certain proportion of the concentration 
distribution, namely
\begin{equation} \label{eq:def:alpha}
\alpha(t + \Delta t)
\equiv 
\int_{\Omega_B} \widetilde{p}(x, t + \Delta t) \, \dx
\end{equation}
would have crossed the interface $I$ in the time interval 
$[t, t + \Delta t)$ 
(see Figure~\ref{fig:deltat}).
\begin{figure}
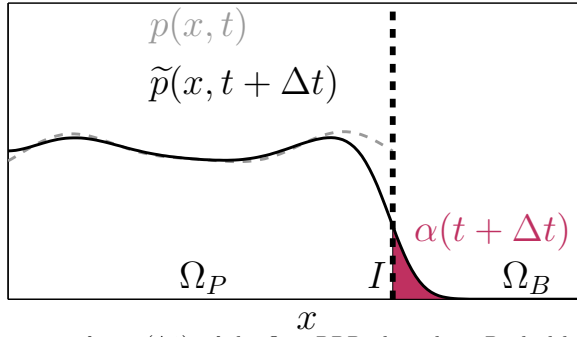

        \centering
        \begin{overpic}[width=0.6\textwidth]{./Figure3_1}
            \put(30, 2){\Large $\Omega_P$}
            \put(85, 2){\Large $\Omega_B$}
            \put(62, 2){\Large $I$}
            \put(25, 45){\color{grey}\Large $p(x, t)$}
            \put(25, 35){\Large $\widetilde{p}(x, t+\Delta t)$}
            \put(70, 10){\color{maroon}\Large $\alpha(t+\Delta t)$}
            \put(50, -5){\Large $x$}
        \end{overpic}

\caption{Illustration of step {\rm (A1)} of the first PBD algorithm. 
Dashed line: $p(x, t)$; solid line: $\widetilde{p}(x, t+\Delta t)$ 
as defined in \eqref{eq:def:ptilde}; shaded area: $\alpha(t+\Delta t)$ 
as defined in \eqref{eq:def:alpha}.}
\label{fig:deltat}
\end{figure}
This value $\alpha(t+\Delta t)$ represents the expected number of 
molecules to cross the interface from $\Omega_P$ to $\Omega_B$ in 
the time interval $[t, t+\Delta t)$. Since all molecules in $\Omega_P$ 
are identical (of the same chemical species and have the same probability 
distribution proportional to $\widetilde{p}(x,t+\Delta t)$), 
the number of molecules that cross the interface 
from $\Omega_P$ to $\Omega_B$ in the time interval $[t, t+\Delta t)$ 
is Poisson distributed with  average  $\alpha(t + \Delta t)$. 
Let us assume that $\Delta t$ has been chosen small enough to ensure 
that $\alpha(t + \Delta t) \ll 1$. In this case 
the probability of more than one particle crossing the interface
is negligible and we need to consider two cases:

{
\leftskip 1cm

\smallskip

\noindent \textbf{(i)} one particle gets created in $\Omega_B$ with 
the probability $\alpha(t + \Delta t)$;

\leftskip 0.95cm

\noindent \textbf{(ii)} no particle is created with the probability 
$1 - \alpha(t+\Delta t)$.

\smallskip
\par}

\noindent For both cases we need to calculate the updated 
concentration $p(x, t+\Delta t)$ where $x\in\Omega_P$.

We consider the concentration $p(x,t)$ as the distribution of 
$N-N_B(t)$ identically distributed particles at time $t$. Therefore 
each of these particles has at time $t$ the probability distribution 
$p(x, t)/(N - N_B(t))$. After one time step each of these particles 
can be found in an infinitesimal interval $[x, x+\dx)$ with  probability
$\widetilde{p}(x,t+\Delta t) \, \dx$. For each particle, its probability 
distribution given that it did not leave $\Omega_P$ can be calculated as
\begin{equation*}
 p_1(x, t+\Delta t) = \frac{\widetilde{p}(x, t+\Delta t)}{N - N_B(t) - \alpha(t+\Delta t)}\,, 
 \qquad \mbox{for} \; x\in\Omega_P\,.
\end{equation*}
On the other hand, if a particle does leave the domain $\Omega_P$, 
then its distribution function becomes zero for
$x\in\Omega_P$. Instead the particle  is introduced 
into $\Omega_B$ at time $t+\Delta t$ at a location $x$ with the 
probability distribution given by
\begin{equation} \label{eq:p2}
p_2(x, t+\Delta t) = \frac{\widetilde{p}(x, t+\Delta t)}{\alpha(t+\Delta t)}\,,
\qquad \mbox{for} \; x\in\Omega_B\,.
\end{equation}
This updating process is rather like collapsing a wavefunction in
quantum mechanics \cite{vonNeumann:1955:MFQ}. 
We have a look to see if a particle has crossed the
boundary into $\Omega_B$: if it is there on the other side then 
its distribution function collapses to a $\delta$ function at its new 
position, while if it is not, then the distribution function collapses 
to zero in $\Omega_B$ with corresponding rescaling in $\Omega_P$.

Combining these arguments for all the particles, we see that the last
update step is a simple rescaling of  
the probability distribution so that the updated distribution
satisfies conservation of mass  
according to \eqref{eq:consmass}. We therefore have
\begin{equation}
p_{\mathrm{cont}}(x,t+\Delta t) 
= \left\{\begin{array}{ll}
   \beta_{(i)} \, \widetilde{p}(x, t+\Delta t), & \mbox{in the case (i),} 
\\
\beta_{(ii)} \, \widetilde{p}(x, t+\Delta t), & \mbox{in the case (ii).}
\end{array}\right. \quad \mbox{for} \; x\in\Omega_P\,,
\label{pcontdistr}
\end{equation}
with the rescaling constants $\beta_{(i)},$ $\beta_{(ii)}$ given by
\begin{equation} \label{eq:defn:betas}
\beta_{(i)} = \frac{N - N_B(t) -1}{N - N_B(t) - \alpha(t + \Delta t)}\,,
\qquad
\beta_{(ii)} = \frac{N - N_B(t)}{N - N_B(t) - \alpha(t + \Delta t)}\,.
\end{equation}
Note that the update step for the continuum regime satisfies 
conservation of mass \eqref{eq:consmass}. An 
illustration of the two cases can be seen in Figure~\ref{fig:agentcreated}.
\begin{figure}
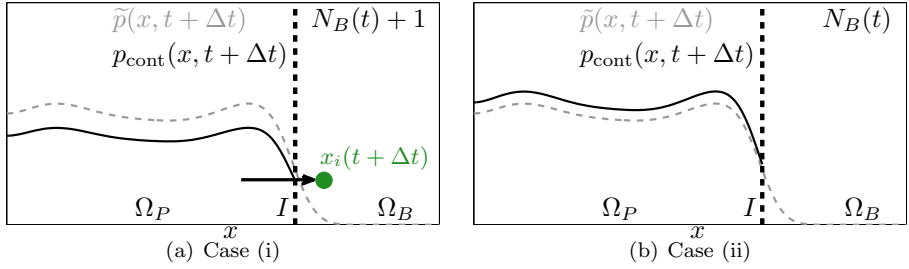

        \centering
        \subfigure[Case (i)]{
        \begin{overpic}[width=0.45\textwidth]{./Figure3_2a}
            \put(30, 2){$\Omega_P$}
            \put(85, 2){$\Omega_B$}
            \put(62, 2){$I$}
            \put(50, -3){$x$}
            \put(25, 45){\color{grey} $\widetilde{p}(x, t+\Delta t)$}
            \put(25, 37){$p_{\mathrm{cont}}(x, t+\Delta t)$}
            \put(72, 14){\small\color{forestgreen} $x_i(t+\Delta t)$}
            \put(70, 45){$N_B(t) + 1$}
        \end{overpic}
        }
        \subfigure[Case (ii)]{
        \begin{overpic}[width=0.45\textwidth]{./Figure3_2b}
            \put(30, 2){$\Omega_P$}
            \put(85, 2){$\Omega_B$}
            \put(62, 2){$I$}
            \put(50, -3){$x$}
            \put(25, 45){\color{grey} $\tilde{p}(x, t+\Delta t)$}
            \put(25, 37){$p_{\mathrm{cont}}(x, t+\Delta t)$}
            \put(80, 45){$N_B(t)$}
        \end{overpic}
        }
\caption{Illustration of steps {\rm (A2)} and
{\rm (A3)} of the first PBD algorithm.  
Dashed line: $\widetilde{p}(x, t+\Delta t)$; 
solid line: $p_{\mathrm{cont}}(x, t+\Delta t)$; 
(green) circle: created particle.}
\label{fig:agentcreated}
\end{figure}

\subsection{Updating the BD regime in $\Omega_B$} \label{subsec:update:BD}
We use the discretized version of the overdamped Langevin equation 
introduced in \eqref{eq:BD} to update the positions
of particles in $\Omega_B$. If the position of the $i$-th molecule, 
computed by \eqref{eq:BD}, is in $\Omega_B$ at the end of the time step,
then we will continue representing it as a particle. Note, 
that a particle that crossed the boundary $I$ and came back into 
$\Omega_B$ during the time step $[t, t + \Delta t)$ is also captured 
by this case. We have to be more careful whenever the position 
$x_i(t+\Delta t)$, computed by \eqref{eq:BD}, is inside the PDE subdomain 
$\Omega_P$ at time $t+\Delta t$, i.e. $x_i(t+\Delta t) \in \Omega_P$. 
In this case, the random walk of the $i$-th molecule crossed the interface 
$I$ without crossing back at the end of the time step. 
This event needs to be taken into account for 
the update of the final concentration $p(x, t+\Delta t)$ in $\Omega_P$. 
As we know the exact position of this particle $i$ at time $t+\Delta t$ 
we add a Dirac $\delta$ function at the position 
$x_i(t+\Delta t)\in\Omega_P$. 
Therefore we compute $p(x,t+\Delta t)$ in $\Omega_P$ by
\begin{equation*}
p(x,t+\Delta t) = p_{\mathrm{cont}}(x,t+\Delta t) 
+ 
\sum_{x_i(t+\Delta t) \in \Omega_P} \delta(x - x_i(t + \Delta t)),
\end{equation*}
where $p_{\mathrm{cont}}(x,t+\Delta t)$ is given
by (\ref{pcontdistr}).

\subsection{The first PBD algorithm} \label{subsec:fullalgorithm}
This algorithm computes the concentration $p(x, t)$ for  $x\in\Omega_P$,
and the number $N_B(t)$ and positions of BD particles 
$x_i(t)\in\Omega_B$, $i=1,2,\dots,N_B(t).$
One time step of the first PBD algorithm is presented in 
Table~\ref{alg:complete} as algorithm (A1)--(A5). 
In order to simplify the presentation
of this algorithm, we consider that the time step $\Delta t$ is
chosen so small that $\alpha(t + \Delta t) \ll 1$. In particular,
we only need to implement cases (i)--(ii) presented in
Section \ref{subsec:update:cont}, because the probability
that two or more molecules are initiated in $\Omega_B$ 
during one time step is negligible.

The auxiliary distribution $\widetilde{p}(x, t+\Delta t)$ in step (A1) 
is in practice calculated using a numerical approximation algorithm. 
To calculate $\alpha(t+\Delta t)$ we can either use this numerical 
approximation of $\,\widetilde{p}(x, t+\Delta t)$, which requires 
an additional time step $\widetilde{\Delta t} \ll \Delta t$ to be 
used to ensure 
the accuracy of $\alpha(t+\Delta t)$, or (more efficiently) we can approximate
$\alpha(t+\Delta t)$ analytically using a boundary layer expansion in
the vicinity of the interface $I$.

\renewcommand{\labelenumi}{(A\arabic{enumi})}
\renewcommand{\labelenumii}{\textbf{(\roman{enumii})}}
    \begin{table}[t]
    \begin{framed}
    \begin{enumerate}
    \setcounter{enumi}{0}
        \item \label{algorithmpos:2}
                Calculate $\widetilde{p}(x, t+\Delta t)$ using 
		\eqref{eq:def:ptilde} and $\alpha(t+\Delta t)$
                using \eqref{eq:def:alpha}.
        \item Generate uniformly distributed random number $r$ in $(0,1)$. 
        \begin{enumerate}
           \item If $r < \alpha(t+\Delta t)$, then create new particle 
	   in $\Omega_B$ according to the probability density
           $p_2(x, t+\Delta t)$ defined in \eqref{eq:p2}. Set
	   $\beta = \beta_{(i)}$ where $\beta_{(i)}$ is given
	   by (\ref{eq:defn:betas}). Set $N_{B,1} = 1.$
           \item If $r \geq \alpha(t+\Delta t)$, then set
	   $\beta = \beta_{(ii)}$ where $\beta_{(ii)}$ is given
	   by (\ref{eq:defn:betas}). \\ Set $N_{B,1} = 0.$
	\end{enumerate}

        \item Compute positions $x_i(t+\Delta t)$, $i=1,2,\dots,N_B(t)$, 
	      of BD particles according to \eqref{eq:BD}. 
        \item Compute new concentration in $\Omega_P$ by
            \begin{equation*}
                p(x, t+\Delta t) = \beta \,
		 \widetilde{p}(x, t+\Delta t) 
		 + \sum_{x_i(t+\Delta t) \in \Omega_P} \!\!\!\!\!
		 \delta(x - x_i(t + \Delta t))\,, \quad 
		 \mbox{for} \; x\in\Omega_P\,.
            \end{equation*}
         \item  Update the number of BD particles by  
	  \begin{equation*}
                N_B(t+\Delta t) = N_B(t) 
		+ 
		N_{B,1} 
		- 
		\left|\left\{ x_i(t+\Delta t)\in\Omega_P\;, i=1,\dots,N_B(t)
		\right\}\right|
                \,.
            \end{equation*}
	    Terminate computation of trajectories of BD molecules which 
	    landed in $\Omega_P$ (i.e. the BD particles which satisfy 
	    $x_i(t+\Delta t)\in\Omega_P.$) \\
	Then continue with step (A1) for time $t+\Delta t$.
\end{enumerate}
\end{framed}
\caption{One time step of the first PBD algorithm for 
a system of diffusing molecules.}
\label{alg:complete}
    \end{table}

By construction, the algorithm (A1)--(A5) satisfies the conservation of 
mass condition \eqref{eq:consmass} and the concentration $p(x,t)$ 
satisfies non-negativity. We will now show that this algorithm also 
guarantees the correct expected outcome and therefore satisfies 
Conditions (C.1) and (C.3).

\begin{theorem}
\label{thm:1}
Consider the BD simulation of $N$ diffusing molecules in the 
computational domain $\Omega$ which is divided into sudomains 
$\Omega_B \subset \Omega$ and $\Omega_P \subset \Omega$ satisfying 
$(\ref{omegacoverage})$
and the case {\rm [A]}. Suppose that $N_B(0)$ particles are initially 
in $\Omega_B$ at positions $x_i(0)$, $i=1,2,\dots,N_B(0).$ Let
us initialize $p(x,0)$ as sums of Dirac $\delta$ functions describing 
molecules which are initially in $\Omega_P$, i.e. $p(x,0) = n(x,0)$ 
for $x \in \Omega_P$. Then the expected outcome of the PBD algorithm 
{\rm (A1)--(A5)} (presented in Table~$\ref{alg:complete}$) 
satisfies the Conditions {\rm (C.1)} and {\rm (C.3)}
for arbitrary $\Delta t > 0$.
\end{theorem}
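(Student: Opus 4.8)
The plan is to prove the two conditions by induction on the number of completed time steps, establishing a single joint statement that is strong enough to close the induction. The natural inductive hypothesis is: after $k$ steps, for every $x \in \Omega_P$ one has $\mathbb{E}[p(x,k\Delta t)] = n(x,k\Delta t)$, and for every $x \in \Omega_B$ the expected particle density $\mathbb{E}[|\{x_i(k\Delta t) \in [x,x+\dx)\}|]/\dx = n(x,k\Delta t)$. The base case $k=0$ is immediate from the stated initialization $p(x,0)=n(x,0)$ on $\Omega_P$ and the fact that $n(x,0)$ on $\Omega_B$ is, by assumption, exactly the empirical density of the $N_B(0)$ initial particles.

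For the inductive step, I would condition on the full state at time $t=k\Delta t$ and compute the expectation of the state at $t+\Delta t$ produced by (A1)--(A5), then take a further (outer) expectation using the hypothesis. First, on $\Omega_P$: from step (A4), $p(x,t+\Delta t) = \beta\,\widetilde p(x,t+\Delta t) + \sum_{x_i(t+\Delta t)\in\Omega_P}\delta(x-x_i(t+\Delta t))$. The random number $r$ is independent of everything else, so $\mathbb{E}[\beta \mid \text{state at }t] = \alpha\,\beta_{(i)} + (1-\alpha)\,\beta_{(ii)}$, and using the definitions \eqref{eq:defn:betas} this telescopes to exactly $(N-N_B(t)-\alpha)^{-1}$ times $\bigl(\alpha(N-N_B(t)-1) + (1-\alpha)(N-N_B(t))\bigr) = (N-N_B(t)-\alpha)$, i.e. $\mathbb{E}[\beta\mid\text{state}] = 1$. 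Hence the conditional expectation of the first term is just $\widetilde p(x,t+\Delta t) = \int_{\Omega_P} K(x-x',\Delta t)\,p(x',t)\,\dx'$. The second term contributes, in conditional expectation over the BD increments \eqref{eq:BD}, $\sum_i \int_{\Omega_B}\! K(x-x_i(t),\Delta t)\,\mathbf{1}_{x\in\Omega_P}$ evaluated as a density, i.e. the expected contribution of each $\Omega_B$-particle that lands in $\Omega_P$ is $K(x-x_i(t),\Delta t)\,\dx$ for $x\in\Omega_P$. Adding the two pieces and then taking the outer expectation with the inductive hypothesis, $\mathbb{E}[p(x,t+\Delta t)] = \int_{\Omega_P} K(x-x',\Delta t)\,n(x',t)\,\dx' + \int_{\Omega_B} K(x-x',\Delta t)\,n(x',t)\,\dx' = \int_{\Omega} K(x-x',\Delta t)\,n(x',t)\,\dx' = n(x,t+\Delta t)$, the last equality being the exact solution formula for \eqref{eq:diffusion}. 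This gives (C.3) at step $k+1$.

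For $\Omega_B$ at step $k+1$ I would do the symmetric bookkeeping: a point $x\in\Omega_B$ receives, in expectation, a contribution $\int_{\Omega_B}K(x-x',\Delta t)\,n(x',t)\,\dx'$ from surviving BD particles (those that moved via \eqref{eq:BD} and landed back in $\Omega_B$ — the same formula covers particles that briefly left and returned, since only the endpoint matters), plus the contribution of the possibly-created particle. The created particle appears with probability $\alpha$ and, given creation, has density $p_2(x,t+\Delta t) = \widetilde p(x,t+\Delta t)/\alpha$ on $\Omega_B$; so its expected contribution is exactly $\widetilde p(x,t+\Delta t)\,\dx = \int_{\Omega_P}K(x-x',\Delta t)\,p(x',t)\,\dx'$, which under the outer expectation becomes $\int_{\Omega_P}K(x-x',\Delta t)\,n(x',t)\,\dx'$. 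Summing again reconstitutes the full-line heat-kernel integral and yields $n(x,t+\Delta t)$, establishing (C.1) at step $k+1$ and completing the induction.

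The only genuinely delicate points — and the place I expect to have to be careful rather than merely compute — are the handling of the Dirac-$\delta$ atoms and the measure-theoretic interpretation of "density" throughout: $p(x,t)$ is not a classical function but a measure (it accumulates $\delta$-spikes from (A4) each step), so "$\mathbb{E}[p(x,t)]=n(x,t)$" must be read in the weak/distributional sense, $\mathbb{E}\int_{\Omega_P}\varphi\,p(\cdot,t) = \int_{\Omega_P}\varphi\,n(\cdot,t)$ for test functions $\varphi$, and I would phrase the whole induction in that integrated form so that applying the kernel $K$ and Fubini is rigorous. One must also confirm that $\alpha<1$ is not actually needed for the $\mathbb{E}[\beta]=1$ cancellation (it holds as an algebraic identity regardless), which is what lets the theorem claim validity for arbitrary $\Delta t>0$; the $\alpha\ll1$ assumption only simplified the algorithm's presentation, not its mean behavior. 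Everything else is linearity of expectation, independence of $r$ and of the Gaussian increments $\xi$, and the semigroup property of the heat kernel.
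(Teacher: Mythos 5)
Your proposal is correct and follows essentially the same route as the paper's proof: induction over time steps, conditioning on the state at time $t$, the algebraic identity $\alpha\beta_{(i)}+(1-\alpha)\beta_{(ii)}=1$, the cancellation $\alpha\, p_2=\widetilde{p}$ for the created particle, and reassembly of the full-line heat kernel via the law of total expectation. Your added remarks on the distributional reading of $p$ and on the role of $\alpha$ are reasonable refinements (though note the creation probability equals $\alpha$ only when $\alpha\le 1$, which is why the paper's remark after the theorem extends the algorithm to multiple creations per step for larger $\Delta t$).
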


\begin{proof}
We will show that the identities \eqref{eq:problem:mean:part} and
\eqref{eq:problem:mean:cont} hold during one iteration
(A1)--(A5) presented in Table~\ref{alg:complete}. It will then follow by 
induction that they hold for all times $k\Delta t$, $k=0,1,2,\dots$. 

Let us assume that the probability distribution of particles in 
$\Omega_B$ at time $t$ is given through $n(x, t)$, $x\in\Omega_B.$
Given the probability distribution $p(x,t)$, $x\in\Omega_P$, at
time $t$, the conditional expected value of $p(x, t+\Delta t)$ 
for $x\in\Omega_P$ is given by
\begin{eqnarray*}
\mathbb{E}\,\big[\,p(x, t+\Delta t)\,|\,p(x,t)\,\big] 
            & = & \alpha(t+\Delta t) \, 
	    \beta_{(i)} \, \widetilde{p}(x,t+\Delta t) \\
            & + & (1-\alpha(t+\Delta t)) \, \beta_{(ii)} \, 
	    \widetilde{p}(x, t+\Delta t) \\
            & + & \int_{\Omega_B} K(x-x', \Delta t) \, n(x', t) \, \dx'\,,
\end{eqnarray*}
 where $\beta_{(i)}$ and $\beta_{(ii)}$ are given
 by (\ref{eq:defn:betas}),
 $K(x-x',\Delta t)$ is the diffusion kernel given in \eqref{eq:def:kernel}
 and the last term represents particles 
 that moved across the interface 
 $I$, defined by (\ref{interfaceI}), during the time step $[t, t+\Delta t)$.
 Using \eqref{eq:defn:betas}, we obtain
$$ 
\mathbb{E}\,\big[\,p(x, t+\Delta t)\,|\,p(x,t)\,\big] 
= \widetilde{p}(x, t+\Delta t) + 
            \int_{\Omega_B} K(x-x', \Delta t) \, n(x', t) \, \dx'.
$$   
Using \eqref{eq:def:ptilde} and the law of total expectation
(law of iterated expectations), we get
$$
\mathbb{E}\,\big[p(x, t+\Delta t)\,\big] 
      = \int_{\Omega_P} \! K(x-x', \Delta t) \, \mathbb{E}[p(x',t)] \, \dx'
	      +
	\int_{\Omega_B} \! K(x-x', \Delta t) \, n(x',t) \, \dx'.
$$    
Using the induction assumption that $\mathbb{E}[p(x, t)] = n(x, t)$, 
we obtain 
$$        
\mathbb{E}\,\big[p(x, t+\Delta t)\,\big] 
            = \int_{\Omega} n(x', t) K(x-x', \Delta t) \, \dx' 
	    = n(x, t+\Delta t) \,,\quad \mbox{for} \; x \in\Omega_P\, ,
$$
i.e. we have derived \eqref{eq:problem:mean:cont}. 

Let us consider a set $A\subset\Omega_B$. 
Given the probability distribution $p(x,t)$, $x\in\Omega_P$, at
time $t$, the conditional
expected number of particles in $A$ at time $t+\Delta t$ is
\begin{eqnarray*}
\mathbb{E}\,\big[\left|
\left\{ x_i(t)\in A\;, i=1,\ldots,N_B(t)\right\}\right|
\, \big| \, p(x,t)
\,\big] 
&=&
\int_A 
\alpha(t+\Delta t) \, p_2(x,t+\Delta t) 
\, \dx \\
&+&
\int_A \int_{\Omega_B} K(x-x', \Delta t) \, n(x',t) \, \dx' \, \dx,
\end{eqnarray*}
where the first term represents newly created particles from the 
PDE regime $\Omega_P$ and the second term represents the movement of 
particles inside $\Omega_B$. Using \eqref{eq:def:ptilde},
(\ref{eq:p2}) and the law of total expectation, we obtain
\begin{eqnarray*}
\mathbb{E}\,\big[\, \big|
\, \left\{ x_i(t)\in A\;, i=1,\ldots,N_B(t)\right\} \, \big| \,
\,\big]
 &=& \int_A
 \int_{\Omega} 
 K(x-x', \Delta t) \, n(x',t) \, \dx' \, \dx \\
 &=& \int_A n(x, t+\Delta t)\, \dx\,,
\end{eqnarray*}
which is the condition \eqref{eq:problem:mean:part}. Thus we have
showed that both Conditions (C.1) and (C.3) are satisfied. This concludes
the proof.
\end{proof}

\medskip

Theorem \ref{thm:1} also holds if the algorithm (A1)--(A5) is extended 
to the creation of more than one new particle per time step as long 
as the expected value of the number of created particles 
is $\alpha(t+\Delta t)$ and the rescaling is done accordingly.
The algorithm (A1)--(A5) and the proof can be easily extended 
for a finite domain $\Omega = [0, L]$ with no flux boundary 
conditions by redefining the kernel $K(\xi, \Delta t)$ accordingly.

\subsection{Discussion of the PBD algorithm (A1)--(A5)}
\label{subsec:problems}
In Theorem~\ref{thm:1} we showed that the PBD algorithm (A1)--(A5)
satisfies the Conditions (C.1) and (C.3). However, we still need 
to check whether the Condition (C.2) on the variances is also satisfied.

To investigate the variances created by this algorithm, we show 
the outcome of an illustrative numerical example in Figure~\ref{fig:example1}.
We simulate the  diffusion of 100 molecules in the domain $\Omega = [-1, 1]$
which are initialized at the same location $x=-0.95$.
We use no flux boundary conditions. We test the algorithm (A1)--(A5) 
where $\Omega_P = (-1,0)$, $\Omega_B = (0,1)$ and $I = \{0\}$.
To calculate $\widetilde{p}(x,\Delta t)$ we use an implicit 
Euler-scheme with $\Delta x = 0.01$ and a numerical time-step of 
$\widetilde{\Delta t} = 10^{-6}$. The time step $\Delta t$ used 
by the algorithm (A1)--(A5) is $\Delta t = 10^{-3}$ 
and we simulate the system until $T_{\mathrm{final}} = 0.2$. The
time step $\widetilde{\Delta t} \ll \Delta t$ for the approximation of 
$\widetilde{p}(x, t+\Delta t)$ was chosen in order to minimise numerical 
artefacts. We run 1000 realisations of this process and measured the number 
of particles in 10 intervals (`bins') of the size $0.1$ in $\Omega_B$.
Averaging over 1000 realisations, we calculate the mean value and the 
variance of the particle number for each of the bins at time
$T_{\mathrm{final}}$. The results are presented in 
Figure~\ref{fig:example1} as gray histograms. In this example, 
it is easy to calculate
the correct distribution $n(x,t)$ which the algorithm (A1)--(A5)
tries to approximate. It can be obtained by solving the diffusion 
equation \eqref{eq:diffusion} in the domain $\Omega = (-1, 1)$ with 
$n(x, 0) = 100 \, \delta(x+0.95)$ and no flux boundary conditions.
The expected values for both means and variances are plotted as 
(red) dashed lines in Figure~\ref{fig:example1}.  
    \begin{figure}
        \center
        \subfigure[Mean]{
            \epsfig{file=./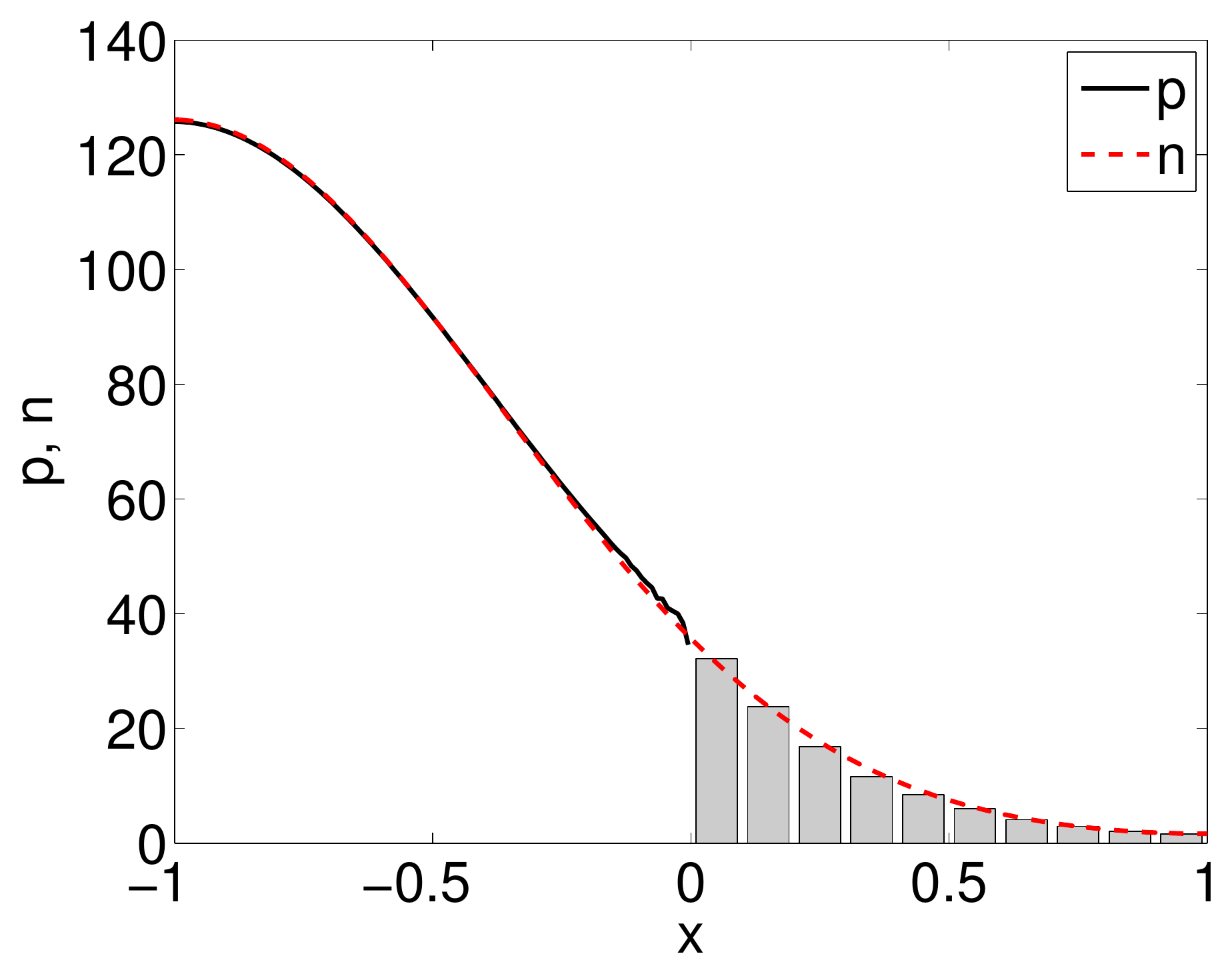, width=0.45\textwidth}
        }
        \subfigure[Variance]{
            \epsfig{file=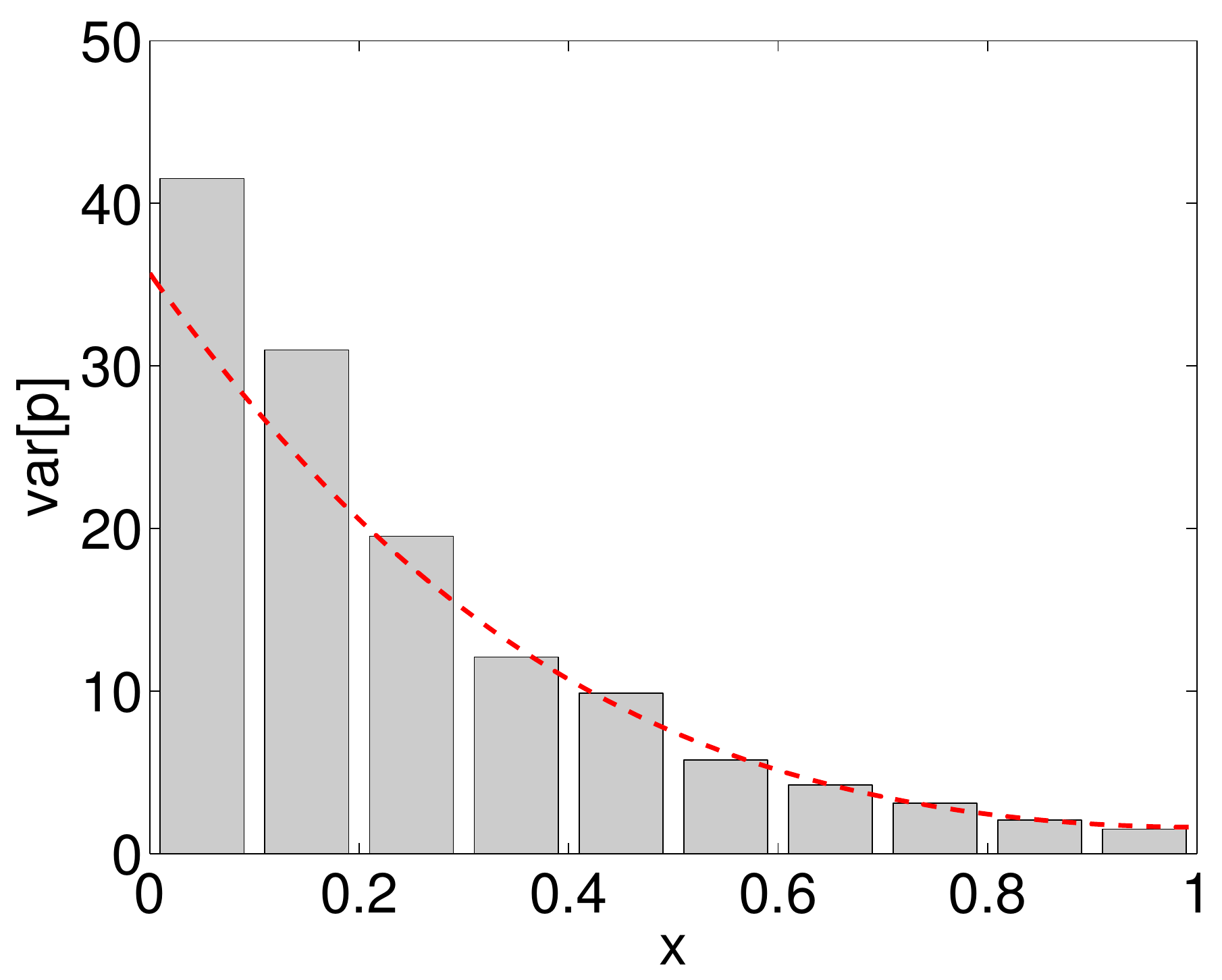, width=0.45\textwidth}
            \label{fig:example1:var}
        }
\caption{Simulation results of diffusion of 100 particles in $(-1,1)$ with 
no flux boundary conditions initialized at $x=-0.95,$
i.e. $n(x, 0) = 100 \, \delta (x+0.95)$. Results averaged over 
1000 realisations. Dashed (red) line: expected outcome. 
{\rm (a)} Solid line: mean value in $\Omega_P$; (gray) bars: 
particle concentrations in $\Omega_B$. {\rm (b)} 
(Gray) bars: measured variances in particle concentrations. 
Parameter values are described in the text.}
\label{fig:example1}
\end{figure}

In Figure \ref{fig:example1}(a), we see that the mean value of the 
simulation results matches well with the solution of the diffusion 
process, with only small fluctuations close to the internal boundary $I$ 
at $x=0$ due to stochastic effects and inaccuracies caused by 
the numerical approximation of $\widetilde{p}(x, t+\Delta t)$. 
However, in Figure~\ref{fig:example1:var} it becomes clear that 
the variance between different realisations is higher than the 
desired value, in particular close to the internal interface. 
We can explain this effect clearly using a thought experiment.

Let us consider a situation where $p(x,0)$ is $0$ close to 
the internal boundary $I$ and has a peak of mass 99 arbitrarily 
far away from $I$. Additionally, we assume that one particle is 
situated in $\Omega_B$ close to the interface $I$. Assuming the 
particle crosses the interface in the first simulation step, a 
Dirac $\delta$ function is created in $\Omega_P$ close to the interface, 
as illustrated in Figure~\ref{fig:thexp1}. This $\delta$ function 
has a large impact on the region in $\Omega_P$ that is close to the 
interface, as the distribution $p(x, \Delta t)$ is negligible in 
this region. Immediately after incorporating the particle into 
$p(x, \Delta t)$, all its information is lost and we are forced 
to assume $100$ independent particles with the probability 
distribution $p(x, \Delta t)$ in $\Omega_P$. In particular 
this implies that every particle has a $1\%$ chance of being 
at the position of the $\delta$ and $99\%$ chance of being 
in the bulk distribution far away from the boundary. This is 
indeed not the case, since we know that there is exactly one 
molecule near the interface and 99 molecules away from the 
interface, but the nature of the continuum distribution means 
that this information must be lost or else we should necessarily 
demand a separate distribution for all molecules in $\Omega_P$.

In the second diffusion step, we calculate $\widetilde{p}(x, 2\Delta t)$ 
according to \eqref{eq:def:ptilde} and some `mass' 
$\alpha(2\Delta t)$ may have drifted across the interface $I$ 
(see Figure~\ref{fig:thexp2}). Because the bulk distribution 
is far away from the boundary, almost all of this mass 
$\alpha(2\Delta t)$ comes from the $\delta$ function close to 
$I$. Let us imagine that a new particle is now created in 
$\Omega_B$ (according to the probability $\alpha(2\Delta t)$), 
in which case the whole distribution 
needs to be rescaled, as shown in Figure~\ref{fig:thexp3}. 
Because $99\%$ of the mass in $p(x, \Delta t)$ is situated 
in the bulk far away from the boundary, the majority of 
the rescaling happens in this region, such that effectively 
the mass needed to create the particle is almost entirely 
taken from the bulk, rather than from the region close to 
the interface. This also implies that most of the mass close 
to the boundary will stay and it is therefore possible to create 
another particle from this mass in further time steps. This 
is in contradiction to the result that would be expected if 
information was not lost in the first time step. That is, 
the distribution close to the interface should dissapear 
and the bulk far from the interface is left alone. This 
effect is the main reason a higher than expected variance 
can be measured in $\Omega_B$. Note, however, that this 
does not effect the expected values, as shown in Theorem~\ref{thm:1}.
     \begin{figure}
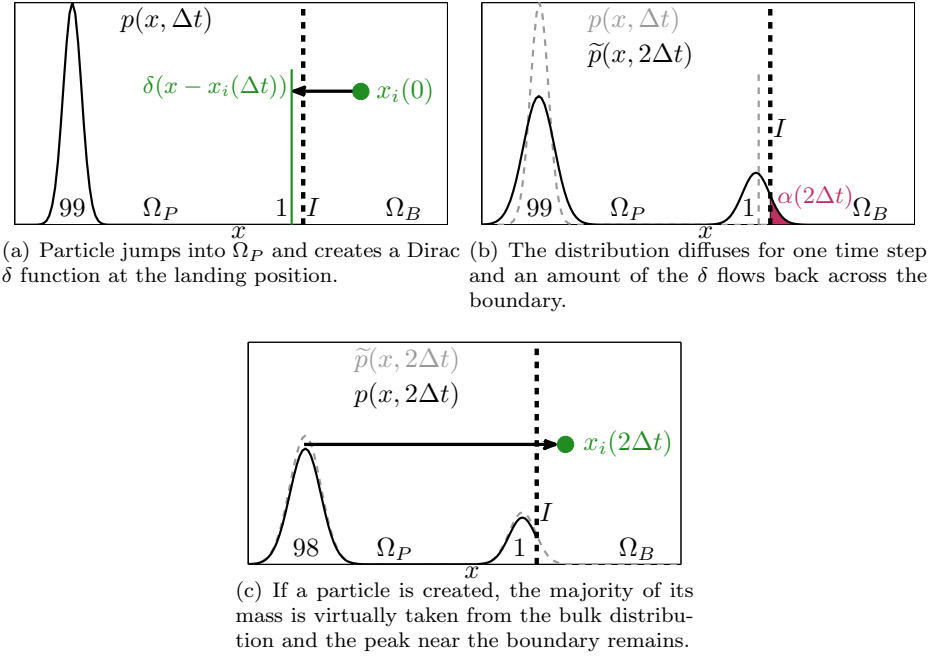

        \center
        \subfigure[Particle jumps into $\Omega_P$ and creates 
	a Dirac $\delta$ function at the landing position.]{
            \label{fig:thexp1}
            \begin{overpic}[width=0.45\textwidth]{./Figure3_4a}
                \put(30, 2){$\Omega_P$}
                \put(85, 2){$\Omega_B$}
                \put(67, 2){$I$}
                \put(50, -3){$x$}
                \put(83, 29){\color{forestgreen} $x_i(0)$}
                \put(30, 30){\color{forestgreen}\small $\delta(x-x_i(\Delta t))$}
                \put(25, 45){$p(x, \Delta t)$}
                \put(11, 2){99}
                \put(60, 2){1}
            \end{overpic}
        }
        \subfigure[The distribution diffuses for one time step and an amount of the $\delta$ flows back across the boundary.]{
            \label{fig:thexp2}
            \begin{overpic}[width=0.45\textwidth]{./Figure3_4b}
                \put(30, 2){$\Omega_P$}
                \put(85, 2){$\Omega_B$}
                \put(67, 20){$I$}
                \put(68, 5){\color{maroon}\small$\alpha(2\Delta t)$}
                \put(50, -3){$x$}
                \put(25, 45){\color{grey} $p(x, \Delta t)$}
                \put(25, 37){$\widetilde{p}(x, 2\Delta t)$}
                \put(11, 2){99}
                \put(60, 2){1}
            \end{overpic}
        }
        \subfigure[If a particle is created, the majority of its mass is virtually taken from the bulk distribution and the peak near the boundary remains.]{
            \label{fig:thexp3}
            \begin{overpic}[width=0.45\textwidth]{./Figure3_4c}
                \put(30, 2){$\Omega_P$}
                \put(85, 2){$\Omega_B$}
                \put(67, 10){$I$}
                \put(77, 26){\color{forestgreen}$x_i(2\Delta t)$}
                \put(50, -3){$x$}
                \put(25, 45){\color{grey} $\widetilde{p}(x, 2\Delta t)$}
                \put(25, 37){$p(x, 2\Delta t)$}
                \put(11, 2){98}
                \put(61, 2){1}
            \end{overpic}
        }
        \caption{Thought experiment that leads to errors in the variance.}
        \label{fig:thoughtexp}
    \end{figure}

Of course, our thought example is an extreme case: we would expect
that in practical cases there would be a significant density of
particles throughout the continuum region (otherwise we would be
tracking them individually). Nevertheless the fact that all
information about an individual  
particle is lost as soon as it crosses the interface does
generate an error in the variance of particle numbers near the
interface, and the  effect 
becomes more pronounced when the concentrations in 
$\Omega_P$ close to $I$ are low.

\subsubsection{Dependence of the variance on the system parameters}
\label{subsubsec:varianceerrors}
We want to quantify the error in the variance as a function of 
the system parameters, which are the size of
the domain $[-L,L]$, the diffusion constant $D$, the simulated 
time $T_{\mathrm{final}}$ and the total mass $N$. 
After a nondimensionalisation the macroscopic PDE (\ref{eq:diffusion})
can be written in the form
\begin{equation*}
\frac{\partial n}{\partial t}
= 
\frac{\partial^2 n}{\partial x^2}\,,\qquad x\in[-1, 1]\,,
\end{equation*}
where the simulation is run until
\begin{equation*}
T_{\mathrm{final}}^* = T_{\mathrm{final}} \frac{L^2}{D}\,.
\end{equation*}
Hence, the system only has two parameters that need to be investigated: 
the final simulation time $T_{\mathrm{final}}^*$ and the total 
number of molecules $N$.

As in Figure \ref{fig:example1}, we use PBD algorithm (A.1)--(A.5)
with $\Omega_P = (-1,0)$, $\Omega_B = (0,1)$ and $I = \{0\}$.
For each parameter ($T_{\mathrm{final}}^*$ and $N$), we simulate 
the system 1000 times for different values of this parameter and 
measure in each case the number of particles in $\Omega_B$ at the end 
of the simulation, i.e. the value
$N_B(T_{\mathrm{final}}^*)$. In Figure~\ref{fig:varianceerror},
\begin{figure}
\center
\subfigure[$N = 10,20,\dots,100, \quad T_{\mathrm{final}}^{*} = 0.2$]{
\epsfig{file=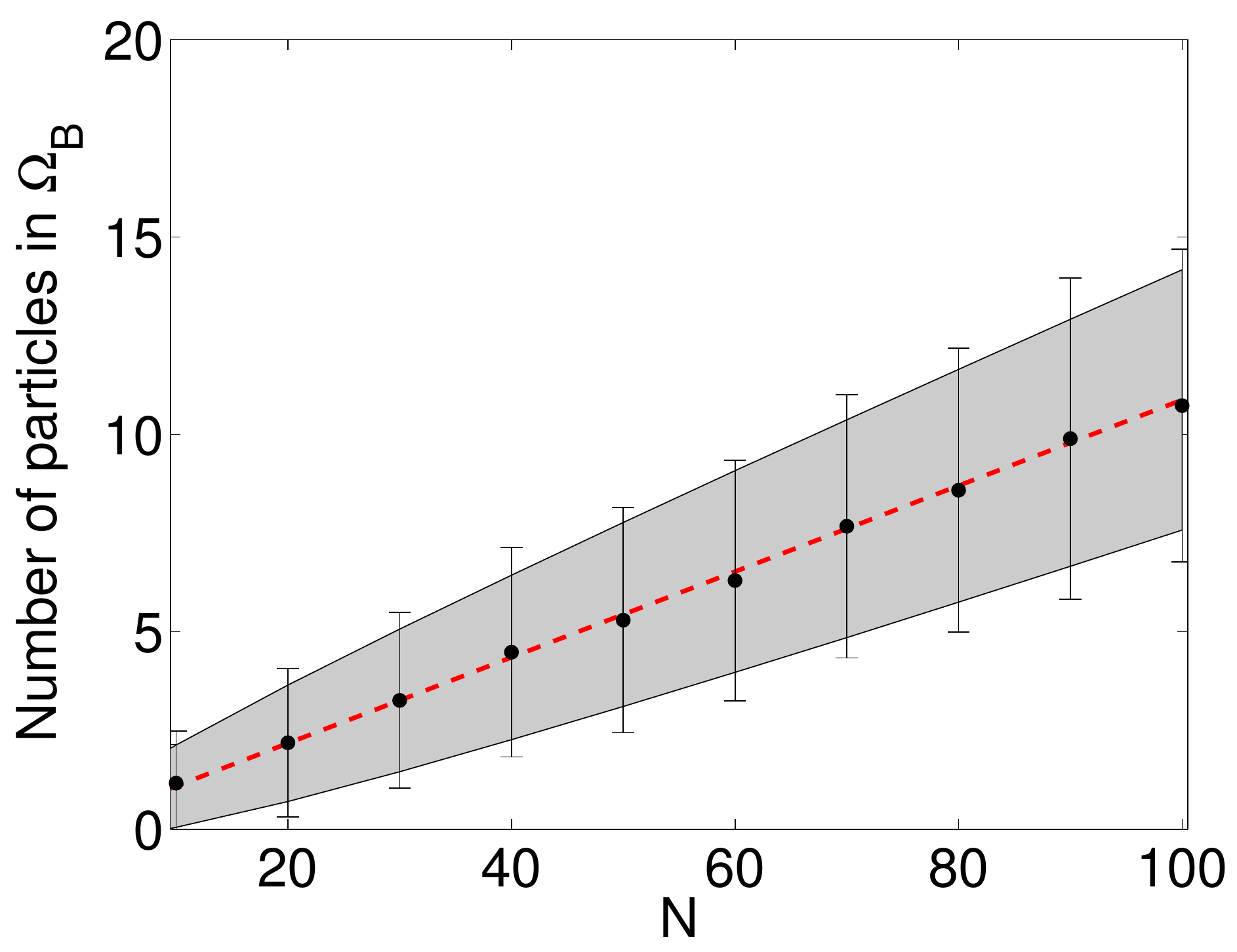, width=0.45\textwidth}
\label{fig:varianceerror:N2}
}
\subfigure[$T_{\mathrm{final}}^* = 0.01,0.02,\dots,0.2, \quad N=100$]{
\epsfig{file=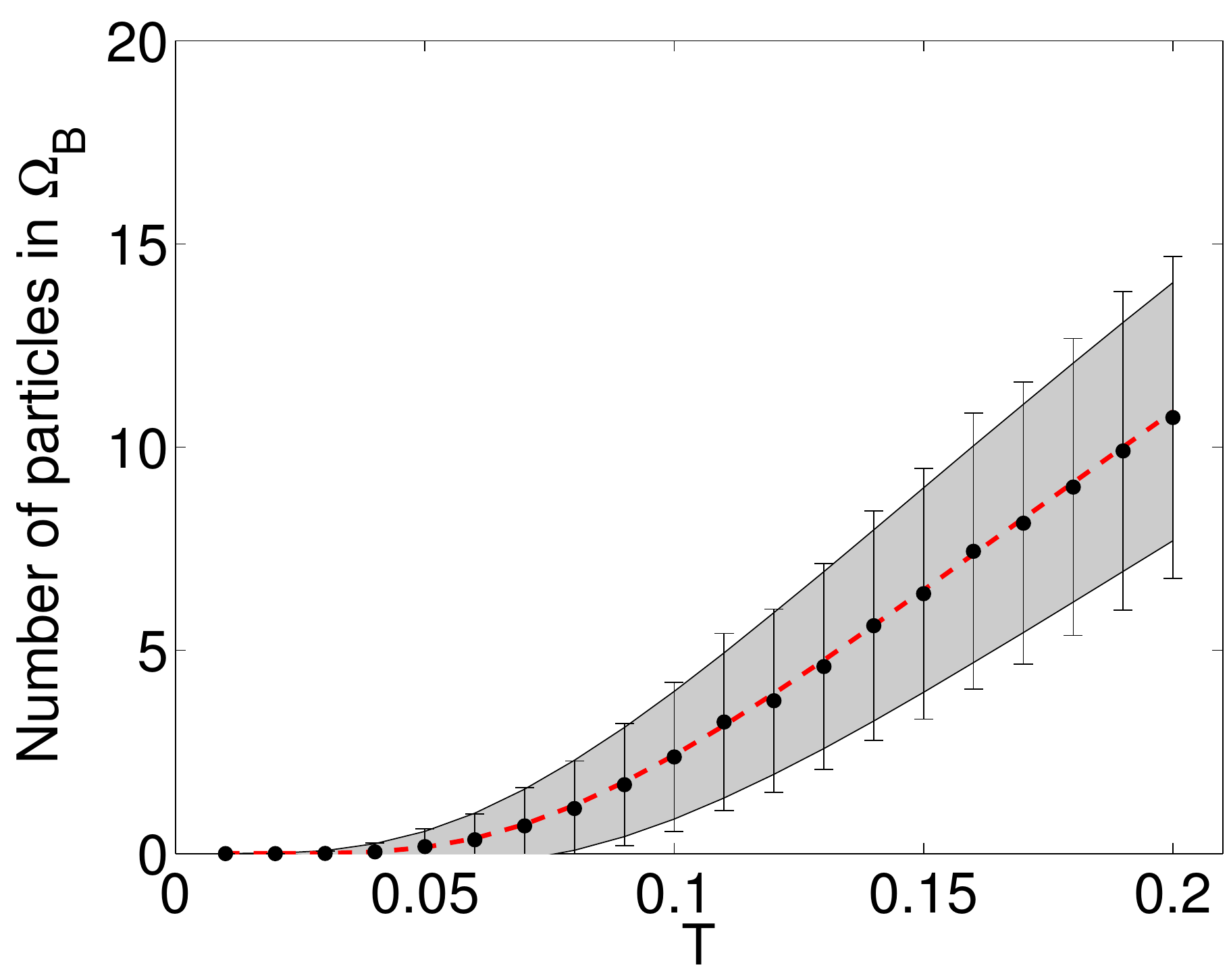, width=0.45\textwidth}
\label{fig:varianceerror:T}
}
\caption{Mean values and standard deviations of the number of 
$N_B(T_{\mathrm{final}}^*)$ depending on 
$T_{\mathrm{final}}^*$ and $N$. 
Dashed line: expected number of particles; shaded area: expected standard 
deviation; dots: measured mean values; error bars: 
measured standard deviations. Other parameters are chosen
as for Figure~\ref{fig:example1}.}
\label{fig:varianceerror}        
\end{figure}
we see that the measured mean values match well with the expected 
outcomes. For the standard deviations, however, we see that for all 
values of $T_{\mathrm{final}}^*$ and $N$ the 
measured outcomes are higher than expected. This is an undesired
effect and the next section will discuss a way to reduce this 
artefact.

\section{A PBD algorithm with an overlap region} 
\label{sec:probation}
In Section~\ref{subsec:problems} we saw that the immediate return 
of particles from $\Omega_P$ into $\Omega_B$ in combination with 
relatively low concentrations close to the interface can lead to 
errors in the variance of particle concentrations 
in $\Omega_B$. One way to overcome this problem is the 
introduction of an overlap region where BD 
simulation and a continuum description exist in parallel,
i.e. we will consider the case [B] defined in Section
\ref{sec:formulation} by $\Omega_B \cap \Omega_P \not = \emptyset$.
A sketch of this new set up can be seen in 
Figure~\ref{fig:sketch:probation} with the overlap region
denoted as $O = \Omega_B \cap \Omega_P$. We also denote the interfaces
$I_1$ and $I_2$ by
\begin{equation*}
I_1 = \partial \Omega_B \cap \Omega_P, \qquad
I_2 = \Omega_B \cap \partial \Omega_P.
\end{equation*}
In the case of  diffusion only,
this new setup requires only subtle changes in the algorithm. 
Molecules are now incorporated into the concentration 
when they cross the boundary $I_1$. The definition 
of $\widetilde{p}$ is still equal to (\ref{eq:def:ptilde}),
but we have to redefine $\alpha(t+\Delta t)$ and
$p_2(x, t+\Delta t)$ as follows
\begin{eqnarray} \label{eq:def:alpha2}
\alpha(t + \Delta t) 
&=& 
\int_{\Omega_B\setminus\Omega_P} 
\widetilde{p}(x, t + \Delta t) \, \dx\,,
\\
 \label{eq:p22}
p_2(x, t+\Delta t) &=& 
\frac{\widetilde{p}(x, t+\Delta t)}{\alpha(t+\Delta t)}\,,
\qquad \mbox{for} \; x \in \Omega_B \setminus \Omega_P \,.
\end{eqnarray}
The introduction of the overlap region prevents undesired 
effects generated by molecu\-les crossing over and coming back 
straight away, as discussed in the thought experiment
in Section \ref{subsec:problems}. 
In particular, a molecule initialized as a Dirac 
$\delta$ function in $\Omega_P \setminus \Omega_B$ initially
contributes very little to the overall 
probability density near the interface $I_2$; by the time it has a
significant probability of crossing $I_2$ its distribution 
has become sufficiently spread that it is  `lost' in the subdomain
$\Omega_{P}$ as is required for the continuous 
distribution. One time step of the second PBD algorithm is presented in 
Table~\ref{alg:probation}. As before, we consider that the time step 
$\Delta t$ is chosen so small that $\alpha(t + \Delta t) \ll 1$.
Therefore, we only need to implement cases (i)--(ii) in step (B2), 
because the probability that two or more molecules are initiated 
in $\Omega_B$ during one time step is negligible.
\begin{figure}
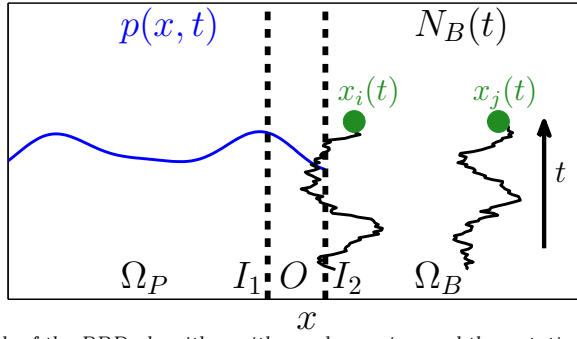

 \centering
        \begin{overpic}[width=0.6\textwidth]{./Figure4_1}
            \put(20, 2){\Large $\Omega_P$}
            \put(70, 2){\Large $\Omega_B$}
            \put(47, 2){\Large $O$}
            \put(39, 2){\Large $I_1$}
            \put(56, 2){\Large $I_2$}
            \put(20, 45){\color{blue}\Large $p(x, t)$}
            \put(70, 45){\Large $N_B(t)$}
            \put(57, 34){\color{forestgreen}\large $x_i(t)$}
            \put(80, 34){\color{forestgreen}\large $x_j(t)$}
            \put(94, 20){\large $t$}
            \put(50, -5){\Large $x$}
        \end{overpic}
        \caption{Sketch of the PBD algorithm with overlap region 
	and the notation related to it. 
        In $\Omega_P$, molecules are described by their density 
	distribution $p(x, t)$,
        in the microscopic domain $\Omega_B$ described by the 
	number $N_B(t)$ of molecules and their positions $x_i(t)$,
        $i=1,\dots,N_B(t)$. In the 
	overlap region $O$, both descriptions
        exist in parallel. The interfaces between the various 
	subdomains are denoted $I_1$ and $I_2$.}
        \label{fig:sketch:probation}
    \end{figure}

\renewcommand{\labelenumi}{(B\arabic{enumi})}
\renewcommand{\labelenumii}{\textbf{(\roman{enumii})}}
\begin{table}[t]
\begin{framed}
\begin{enumerate}
    \setcounter{enumi}{0}
        \item Calculate $\widetilde{p}(x, t+\Delta t)$ using 
		\eqref{eq:def:ptilde} and $\alpha(t+\Delta t)$
                using \eqref{eq:def:alpha2}.
        \item Generate uniformly distributed random number $r$ in $(0,1)$. 
        \begin{enumerate}
           \item If $r < \alpha(t+\Delta t)$, then create new particle 
	   in $\Omega_B \setminus \Omega_P$ according to the probability 
	   density
           $p_2(x, t+\Delta t)$ defined in \eqref{eq:p22}. Set
	   $\beta = \beta_{(i)}$ where $\beta_{(i)}$ is given
	   by (\ref{eq:defn:betas}). Set $N_{B,1} = 1.$
           \item If $r \geq \alpha(t+\Delta t)$, then set
	   $\beta = \beta_{(ii)}$ where $\beta_{(ii)}$ is given
	   by (\ref{eq:defn:betas}). \\ Set $N_{B,1} = 0.$
	\end{enumerate}

        \item Compute positions $x_i(t+\Delta t)$, $i=1,2,\dots,N_B(t)$, 
	      of BD particles according to \eqref{eq:BD}. 
        \item Compute new concentration in $\Omega_P$ by
            \begin{equation*}
                p(x, t+\Delta t) = \beta \,
		 \widetilde{p}(x, t+\Delta t) 
		 + \sum_{x_i(t+\Delta t) \in \Omega_P \setminus \Omega_B} 
		 \!\!\!\!\!
		 \delta(x - x_i(t + \Delta t))\,, \quad 
		 \mbox{for} \; x\in\Omega_P\,.
            \end{equation*}
         \item  Update the number of BD particles by  
	  \begin{equation*}
                N_B(t+\Delta t) = N_B(t) 
		+ 
		N_{B,1} 
		- 
		\left|\left\{ x_i(t+\Delta t)\in\Omega_P\setminus \Omega_B\;, 
		i=1,\dots,N_B(t)
		\right\}\right|
                \,.
            \end{equation*}
	    Terminate computation of trajectories of BD molecules which 
	    landed in $\Omega_P\setminus \Omega_B$ (i.e. the BD particles 
	    which satisfy $x_i(t+\Delta t)\in\Omega_P\setminus \Omega_B.$) \\
	Then continue with step (B1) for time $t+\Delta t$.
\end{enumerate}
\end{framed}
\caption{One time step of the PBD algorithm with overlap region
for system of diffusing molecules.}
\label{alg:probation}
\end{table}

In order to highlight the advantages of the overlap region, 
we simulate the same diffusion process as in 
Figure~\ref{fig:example1} with 
$\Omega_P = (-1,0)$ and $\Omega_B = (-0.1,1)$. Then
the overlap region is $O = \Omega_P \cap \Omega_B = (-0.1, 0)$. 
The results are shown in Figure~\ref{fig:example1_pz}.
As before, the mean outcome matches well with the exact solution, 
with stochastic fluctuations inside the overlap region $O$ due to the
mixed description. In Figure~\ref{fig:example1_pz:var} we
see that the introduction of the overlap region indeed reduced
the problem of high variances inside $\Omega_B \setminus \Omega_P$.
\begin{figure}
\center
\subfigure[Mean]{
\epsfig{file=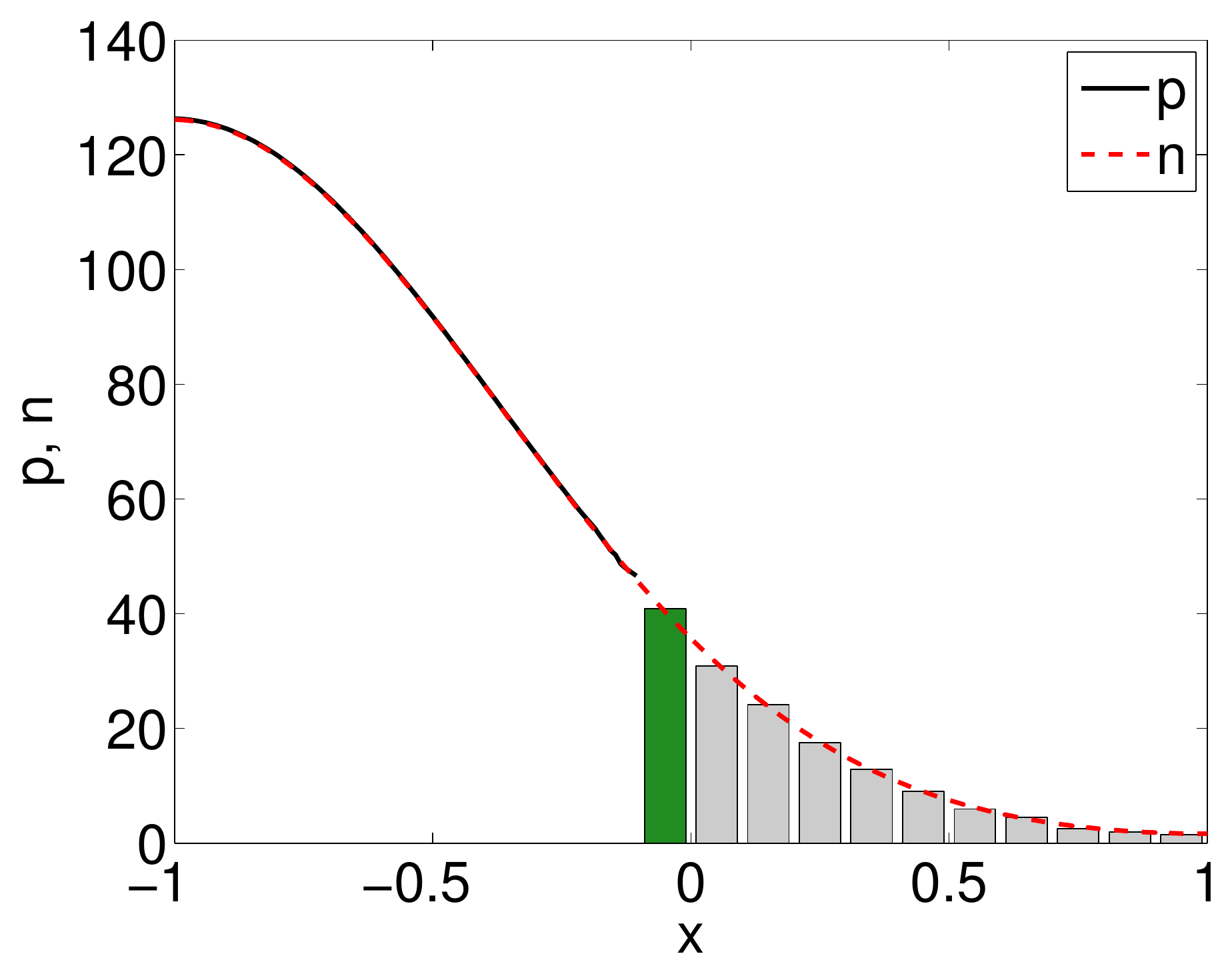, width=0.45\textwidth}
}
\subfigure[Variance]{
\epsfig{file=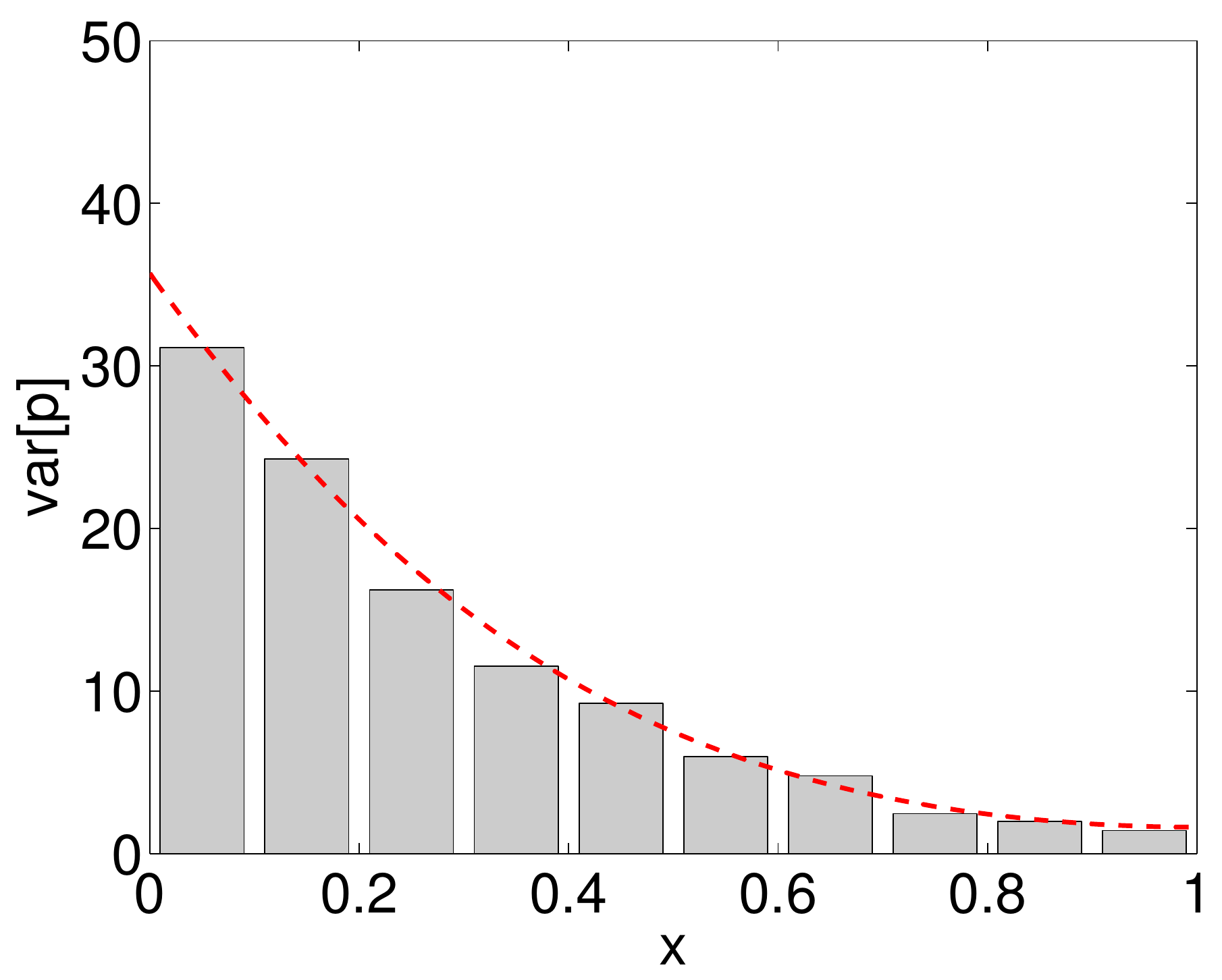, width=0.45\textwidth}
\label{fig:example1_pz:var}
}
\caption{Simulation results of a diffusion process in 
$\Omega = (-1,1)$ with no flux boundary conditions and 
initial conditions $n(x, 0) = 100\delta (x+0.95)$ with 
$\Omega_P = (-1, 0)$, $\Omega_B = (-0.1, 1)$ and
$O = (-0.1, 0)$ averaged over 1000 realisations. 
Dashed (red) line: expected outcome. {\rm (a)} Solid line: 
mean value in $\Omega_P$; bars: particle concentrations 
in $\Omega_B \setminus  O$ (gray) and $O$ (green). 
{\rm (b)} (Gray) bars: measured variances in particle concentrations. 
Parameter values as for Figure~\ref{fig:example1}.}
\label{fig:example1_pz}
\end{figure}
A proof similar to Theorem~\ref{thm:1} can be used to show
that the PBD algorithm (B1)--(B5) satisfies Conditions
(C.1) and (C.3) for $\Delta t > 0$. We will now further 
show that the algorithm describes a diffusion 
process exactly in the limit $\Delta t \to 0$.
    
\begin{theorem} \label{thm:2}
Suppose that a PDE-description of the system is 
used in $\Omega_{P} = (-\infty, 0)$, 
and a BD simulation in $\Omega_{B} = (-d, \infty)$ 
with the overlap region $O = (-d,0)$ where $d>0$.
In the limit that $\Delta t \rightarrow 0$ the expected
concentration distribution $P_P(x,t) = {\mathbb E}[p(x,t)]$  
in the continuum regime and the expected concentration 
distribution $P_B(x,t)$ in the BD regime obey the equations
\begin{align}
\label{gov1} 
\frac{\partial P_P}{\partial t} 
&= D\frac{\partial^2 P_P}{\partial x^2} + 
\left.
D\frac{\partial P_B}{\partial x}\right|_{x \rightarrow -d_+}
\delta\left(x+d\right),  
\quad \ x\in \Omega_{P}\,,\\
\label{gov2} \frac{\partial P_B}{\partial t} 
&= D\frac{\partial^2 P_B}{\partial x^2} - 
\left.
D\frac{\partial P_P}{\partial x}\right|_{x \rightarrow 0_-}
\delta\left(x\right),   
\quad \ x\in \Omega_{B}\,,
\end{align}
where 
$$
P_P(0,t) = 0\,, \qquad P_B(-d,t) = 0\,, \qquad \mbox{for} \quad t > 0.
$$
Extend each distribution to the whole line by defining $P_P(x,t) = 0$
for $x \in (0,\infty)$ and $P_B(x,t) = 0$ for $x \in (-\infty,-d)$. 
Then the sum of these two processes $n(x,t) = P_P(x,t) + P_B(x,t)$, 
$x\in\Omega$, $t>0$ satisfies the diffusion equation
\eqref{eq:diffusion}.
\end{theorem}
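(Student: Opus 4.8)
The plan is to proceed in two stages. First, I would obtain the coupled system \eqref{gov1}--\eqref{gov2} together with the boundary conditions $P_P(0,t)=0$ and $P_B(-d,t)=0$ as the $\Delta t\to 0$ limit of the expected one-step update of (B1)--(B5). Taking expectations conditioned on the state at time $t$, exactly as in Theorem~\ref{thm:1}, the identity $\alpha\beta_{(i)}+(1-\alpha)\beta_{(ii)}=1$ from \eqref{eq:defn:betas} shows that the conditional expectation of $\beta\widetilde p$ is $\widetilde p$, so $P_P=\mathbb E[p]$ is advanced by free diffusion restricted to $\Omega_P$ with the mass landing in $\Omega_B\setminus\Omega_P$ deleted, while that deleted mass reappears as a contribution to $P_B$ distributed like $\mathbb E[\widetilde p]$ on $\Omega_B\setminus\Omega_P$; symmetrically $P_B$ is advanced by free diffusion on $\Omega_B$ with the mass landing in $\Omega_P\setminus\Omega_B$ deleted and re-injected into $P_P$. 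Since $p$ is supported in $\Omega_P$ and the BD density in $\Omega_B$, iterating each ``diffuse freely, delete what crosses'' map and letting $\Delta t\to 0$ produces the heat semigroup with an absorbing (Dirichlet) condition at the interface through which mass is removed --- hence $P_P(0,t)=0$ at $I_2$ and $P_B(-d,t)=0$ at $I_1$ --- while the re-injected mass, which in one step lands within $O(\sqrt{D\Delta t})$ of that interface, localises there at a rate equal to the diffusive flux into the boundary: $-D\,\partial_x P_P\big|_{x\to 0_-}$ of continuum mass enters $\Omega_B$ at $x=0$ and $D\,\partial_x P_B\big|_{x\to -d_+}$ of BD mass enters $\Omega_P$ at $x=-d$. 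This is precisely the $\delta$-source structure of \eqref{gov1}--\eqref{gov2}.

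The second stage --- which I would write out in detail --- is the distributional computation showing $n=P_P+P_B$ solves \eqref{eq:diffusion}. I would work on all of $\mathbb{R}$ with the zero-extensions $\bar P_P$ (equal to $P_P$ on $\Omega_P$ and $0$ on $(0,\infty)$) and $\bar P_B$ (equal to $P_B$ on $\Omega_B$ and $0$ on $(-\infty,-d)$). The boundary conditions make $\bar P_P$ and $\bar P_B$ continuous across $I_2$ and $I_1$, so their distributional second derivatives differ from the classical ones only by $\delta$-terms at $\{0,-d\}$, with no $\delta'$-terms. Read distributionally, \eqref{gov1}--\eqref{gov2} fix the internal jumps of the first derivatives --- $P_P'$ jumps by $-\partial_x P_B\big|_{x\to -d_+}$ at $x=-d$ and $P_B'$ jumps by $\partial_x P_P\big|_{x\to 0_-}$ at $x=0$ --- while the extensions contribute additional jumps $-\partial_x P_P\big|_{x\to 0_-}$ to $\bar P_P'$ at $x=0$ and $\partial_x P_B\big|_{x\to -d_+}$ to $\bar P_B'$ at $x=-d$. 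Since the classical heat equation holds on each of $(-\infty,-d)$, $(-d,0)$, $(0,\infty)$, collecting the $\delta$-contributions gives
\begin{align*}
(\partial_t-D\partial_{xx})\bar P_P &= D\,\partial_x P_B\big|_{x\to-d_+}\,\delta(x+d)+D\,\partial_x P_P\big|_{x\to0_-}\,\delta(x),\\
(\partial_t-D\partial_{xx})\bar P_B &= -D\,\partial_x P_B\big|_{x\to-d_+}\,\delta(x+d)-D\,\partial_x P_P\big|_{x\to0_-}\,\delta(x),
\end{align*}
and summing, the right-hand sides cancel identically; thus $\partial_t n=D\,\partial_{xx}n$ in $\mathcal D'(\mathbb{R}\times(0,\infty))$, and since $n$ is continuous, hypoellipticity of the heat operator upgrades this to a classical solution of \eqref{eq:diffusion}. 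The cancellation is nothing but conservation of mass: the flux $\Omega_B$ loses at $I_1$ is the source $\Omega_P$ gains there, and likewise at $I_2$.

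I expect the first stage to be the main obstacle: making the $\Delta t\to 0$ limit rigorous requires showing that the ``delete at the endpoint if you cross'' rule converges to a genuine Dirichlet condition --- the naive time step leaves an $O(\sqrt{\Delta t})$ boundary-layer residue that must be controlled over $O(1/\Delta t)$ steps --- and that the re-injection source collapses to a $\delta$ with weight exactly the diffusive flux. I would present this stage as a formal matched-asymptotic expansion in $\Delta t$ with boundary layers near $I_1$ and $I_2$, in the spirit of the $\alpha$-expansion mentioned in Section~\ref{subsec:fullalgorithm}, and regard the distributional identity of the second stage as the rigorous core of the statement. As a cross-check, the second stage also follows from the fact --- provable as for Theorem~\ref{thm:1} --- that $\mathbb E[p]$ plus the expected BD density equals the exact mean field $n$, which solves \eqref{eq:diffusion} by construction (equation \eqref{eq:problem:mean:cont:overlap}).
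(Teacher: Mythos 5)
Your stage 2 is essentially the paper's own proof: the paper likewise observes that the Dirichlet conditions make the zero-extended $n$ continuous at $x=0$ and $x=-d$, computes the jumps in $\partial n/\partial x$ there induced by the $\delta$-sources in \eqref{gov1}--\eqref{gov2} and by the extensions, shows they cancel, and invokes standard regularity --- your distributional phrasing with hypoellipticity is the same argument, and your jump bookkeeping agrees with the paper's. Note, however, that the paper does not prove your stage 1 at all: equations \eqref{gov1}--\eqref{gov2} and the conditions $P_P(0,t)=P_B(-d,t)=0$ are asserted as part of the theorem statement and taken as given in its proof, so the $\Delta t\to 0$ derivation you rightly flag as the main obstacle (and propose only as a formal matched-asymptotic expansion) is additional work that the paper itself does not carry out.
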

    
\begin{proof}
Consider the function 
\begin{equation} 
n(x,t) = P_P(x,t) + P_B(x,t).
\end{equation}
Clearly
\[  \frac{\partial n}{\partial t} 
	    = D\frac{\partial^2 n}{\partial x^2} \qquad \mbox{ for }x
            \in 
	        (-\infty,-d)\cup(-d,0)\cup(0,\infty).
\] 
Since each of $P_P$ and $P_B$ is continuous at $x=-d$ and $x=0$, function
$n$ will be continuous there. Moreover, since
\[ \left[ \frac{\partial P_P}{\partial x} \right]^{x=0_+}_{x=0_-} =
-\frac{\partial P_P}{\partial x}\left(0_-,t\right), \qquad
 \left[ \frac{\partial P_B}{\partial x} \right]^{x=0_+}_{x=0_-} =
\frac{\partial P_P}{\partial x}\left(0_-,t\right),\]
$\partial n/\partial x$ is continuous at $x=0$. Similarly, since
\[ \left[ \frac{\partial P_P}{\partial x} \right]^{x=-d_+}_{x=-d_-} =
-\frac{\partial P_B}{\partial x}\left(-d_+,t\right), \qquad
 \left[ \frac{\partial P_B}{\partial x} \right]^{x=-d_+}_{x=-d_-} =
\frac{\partial P_B}{\partial x}\left(-d_+,t\right),\]
$\partial n/\partial x$ is also continuous at $x= -d$.
By standard regularity results this is enough to guarantee that $n$
satisfies the diffusion equation on the whole real line.
\end{proof}

\noindent\textbf{Remark.} For the overlap region to 
give a different result from the simple PDB algorithm
(A1)--(A5), we should choose $d$ much bigger than the mean
displacement $\sqrt{2D\Delta t}$ of a particle  
given the time step $\Delta t$ as defined in \eqref{eq:BD}.

\subsection{Dependence of the variance on the system parameters}
In order to show that the variances are indeed accurately
produced by the PBD algorithm (B1)--(B5), we repeated the numerical 
experiments conducted in Section~\ref{subsubsec:varianceerrors}. 
We chose $\Omega_P = (-1, 0)$, $O = (-0.1, 0)$ and $\Omega_B = (-0.1, 1)$ 
and measure the number molecules situated in $\Omega_B\setminus\Omega_P
= [0,1)$ 
at the end of the simulation. We again calculate the
mean values and standard deviation and present the results in 
Figure~\ref{fig:varianceerror:PZ}.
\begin{figure}
\center
        \subfigure[$N = 10,20,\dots,100, \quad T_{\mathrm{final}}^* = 0.2$]{
            \epsfig{file=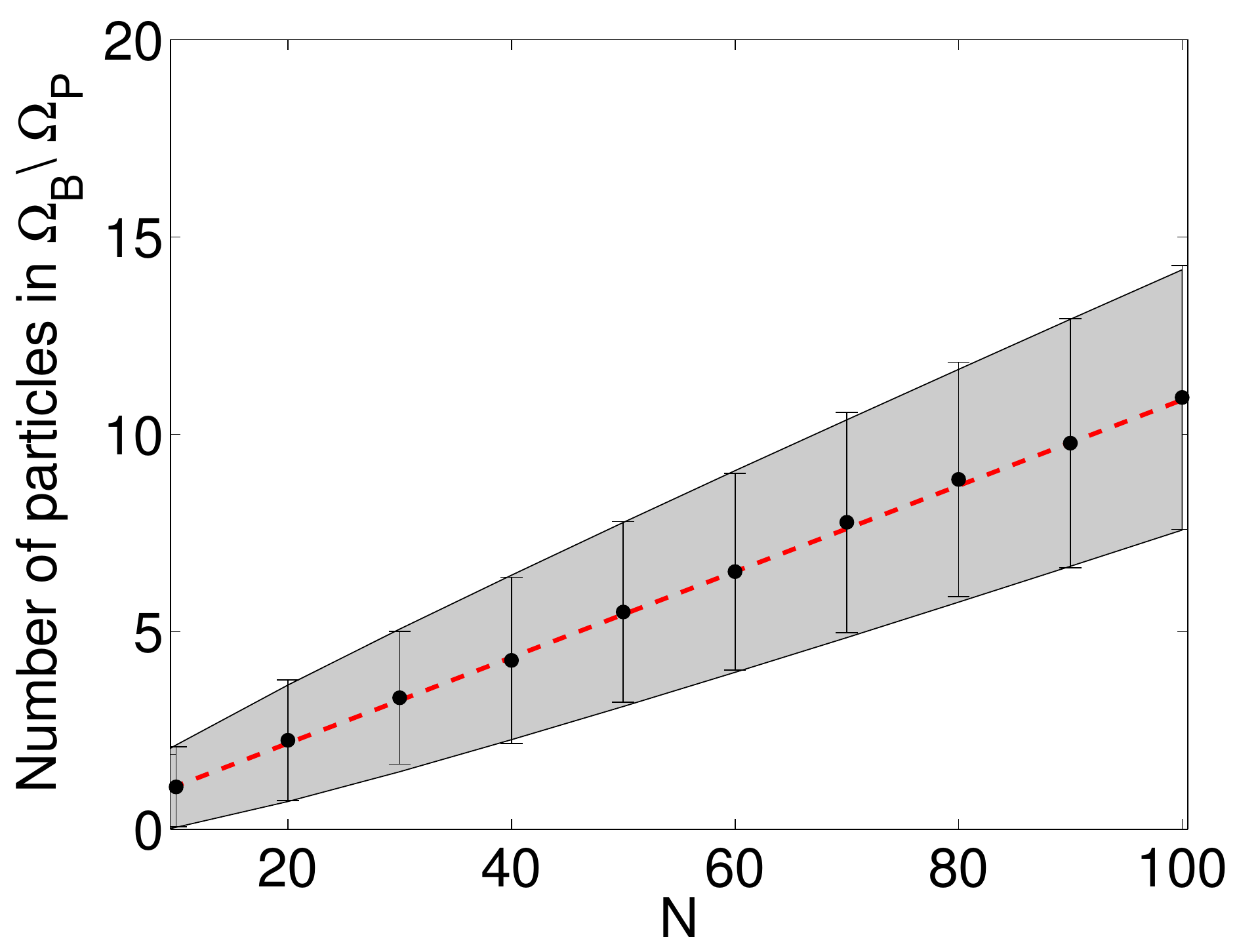, width=0.45\textwidth}
            \label{fig:varianceerror:N2:PZ}
        }
        \subfigure[$T_{\mathrm{final}}^* = 0.01,0.02,\dots,0.2, \quad N=100$]{
            \epsfig{file=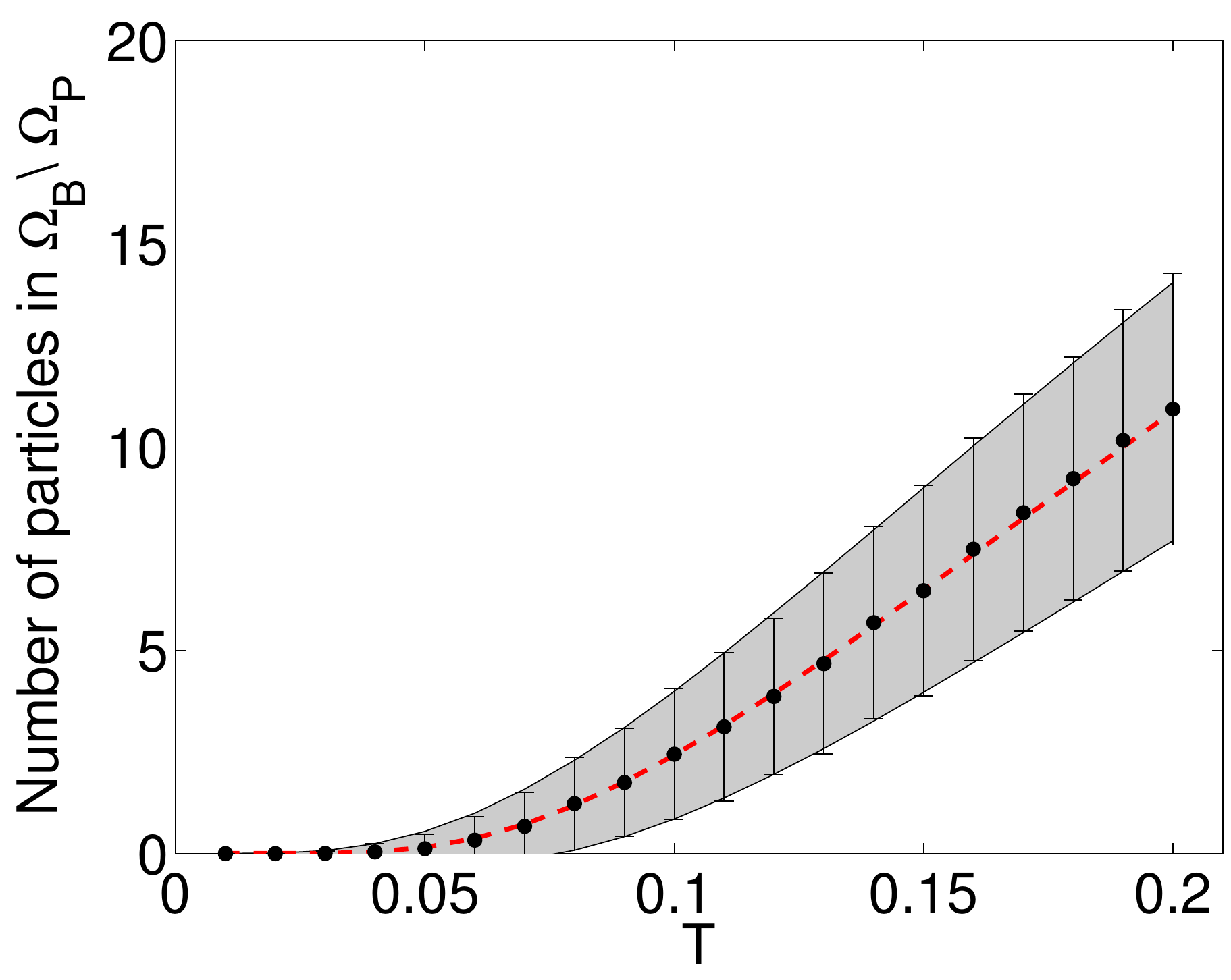, width=0.45\textwidth}
            \label{fig:varianceerror:T:PZ}
        }
        \caption{Mean values and standard deviations of the number of 
	particles in $\Omega_B\setminus\Omega_P$ at time 
	$T_{\mathrm{final}}^*$ 
        depending on $T_{\mathrm{final}}^*$ and $N$. 
	Dashed line: expected number of particles; shaded area: 
	expected standard 
        deviation; dots: measured mean values; error bars: 
	measured standard deviations. Parameters as for 
	Figure~\ref{fig:example1_pz}.}
        \label{fig:varianceerror:PZ}
\end{figure}
We can clearly see that the adjusted algorithm produces more accurate 
standard deviations than the original
algorithm, whilst keeping the mean values correct. We 
conclude that the algorithm (B1)--(B5) produces an accurate
BD simulation inside $\Omega_B\setminus\Omega_P$ and therefore satisfies the
Conditions (C.1)--(C.3).
We will use the PBD algorithm (B1)--(B5) in the remainder 
of this paper.

\section{Reaction-diffusion systems} \label{sec:reactions}
In the next step we introduce chemical reactions into the system 
presented in Section~\ref{sec:probation}. We will concentrate 
on zero-order and first-order reactions \cite{Erban:2007:PGS}. 
First-order reactions are reactions which only have one 
reactant, for example, 
\begin{equation}
X_1 \ \stackrel{k_1}{\longrightarrow} \ X_2 \,,
\qquad
\mbox{or}
\qquad
X_1 \ \stackrel{k_2}{\longrightarrow} \ X_4 + X_5 \,,
\label{pomreact}
\end{equation}
where $X_i$ denote chemical species and $k_1$ (resp. $k_2$) 
the corresponding rate constant (which has physical units
$[$sec$^{-1}]$). In what follows, we will denote by $\emptyset$ 
 chemical species which are
of no interest to a modeller. Then, considering that
$X_1$ is the only chemical species of interest, we can
rewrite the reactions (\ref{pomreact}) as
\begin{equation} 
\label{eq:reactionsdeg}
X_1 \ \stackrel{k_d}{\longrightarrow} \ \emptyset,
\end{equation}
where $k_d = k_1 + k_2$. We will also consider zero-order
reactions. An example is:
\begin{equation} 
\label{eq:reactionspro}
\emptyset \ \stackrel{k_p}{\longrightarrow} \ X_1,
\end{equation}
where the rate constant $k_p$ has physical units $[$M$\,$sec$^{-1}]$,
i.e. it is the production rate per unit of volume and unit of time.
It is relatively straightforward to implement zero-order and 
first-order chemical reactions in the PBD algorithms (A1)--(A5)
and (B1)--(B5), because these reactions can be treated in the
individual parts of the system (continuum and BD simulation) 
independently. Note that for higher-order reactions this 
is not necessarily the case, as particles could react with the 
continuum inside the overlap region $O$. This case is not discussed
in this paper.
    
In the continuum regime reactions are represented by the term 
$R_j(p_1, p_2, \dots, p_M)$ on the right-hand side of the 
reaction-diffusion PDE \eqref{eq:reactdiff}. For example, if the chemical
species $X_1$ is subject to chemical reactions 
(\ref{eq:reactionsdeg})--(\ref{eq:reactionspro}), then the
reaction-diffusion PDE \eqref{eq:reactdiff} takes the form
\begin{equation*}
\frac{\partial p_1}{\partial t} 
= 
D_1 \frac{\partial^2 p_1}{\partial x^2} 
- k_d \, p_1 + k_p \,.
\end{equation*}
In the BD simulations, the molecules act independently and the 
reactions can therefore be treated individually. A summary 
of how to implement various reactions in BD simulations can 
be found in \cite{Erban:2007:PGS}. 

Although implementation of the zero-order and first-order
reactions is relatively straighforward, one has to 
still consider some special effects that are related to the 
coupling of the two parts of the domain. 
First, what happens when a particle that is supposed to 
react at time $t_1$ crosses the interface $I_1$ at an earlier time
$t_2 < t_1$? Since we assumed that all information about 
particles is lost as soon as they cross the interface $I_1$, 
we incorporate it into the continuum and the reaction 
at time $t_1$ does not happen.
Second, what happens when a particle is created inside 
the overlap region $O$? A number of solutions to this problem 
are possible: one could split the creation in equal parts, 
or declare creation to only contribute to either the continuum 
or the molecular-based description. We will here assume that 
all creation inside the overlap region occurs in the form 
of molecules with exact positions. 

Finally, let us note that (reactive) boundary conditions on the 
external boundary $\partial \Omega$ can be treated according to 
the corresponding modelling regime, i.e. whether the corresponding
segment of $\partial \Omega$ is part of $\partial \Omega_P$ or 
$\partial \Omega_B$. Derivation of reactive boundary conditions 
of BD simulations which are consistent with the PDE description 
can be found in \cite{Erban:2007:RBC}. External boundaries slightly 
modify the computation of $\widetilde{p}(x, t+\Delta t)$ in 
$\Omega_P$. It is still given by (\ref{eq:def:ptilde}) but the kernel
(\ref{eq:def:kernel}) has to be updated to take into account
the boundary condition imposed on the external boundary
$\partial \Omega$. We have already
done this when we showed simulations of the diffusion process
in Figures~\ref{fig:example1} and \ref{fig:example1_pz} in the 
finite interval $(-1,1)$ with no flux boundary conditions. However,
for small timesteps $\Delta t$ the change is negligible, since all the
action takes place near the interface $I_2$.

We conclude this section with three examples which illustrate the 
behaviour of the PBD algorithm (B1)--(B5) for reaction-diffusion
systems. They include the modelling of morphogen gradients and 
chemisorption.

\subsection{Example 1: morphogen gradient}
In the first example we compute a steady state for a morphogen 
gradient model \cite{Tostevin:2007:FLP,Wolpert:2002:PD,Howard:2012:HBR}.
We consider one chemical species (morphogen) inside the
domain $\Omega = (-1, 1)$. All parameters are dimensionless
for simplicity. The only reaction inside $\Omega = (-1, 1)$ is the 
degradation (\ref{eq:reactionsdeg}). Additionally to this reaction, 
we assume a constant influx 
$J/D$ through the left-hand boundary $x=-1$ to the
continuum subdomain $\Omega_P= (-1, 0)$. 
We use $O = (-0.1, 0)$ and $\Omega_B = (-0.1, 1)$ with a no 
flux boundary at $x=1$. Since we only have first-order reaction
(\ref{eq:reactionsdeg}), the exact solution $n(x,t)$ is given by
\begin{equation} \label{eq:exp1:full}
  \frac{\partial n}{\partial t} = D \frac{\partial^2 n}{\partial x^2} - k_1 n\,,\qquad
  \frac{\partial n}{\partial x} (-1, t) = -J\,, \qquad
  \frac{\partial n}{\partial x} (1, t) = 0\,.
\end{equation}
This system is initialised with $n(x, 0) = 0$, $x \in \Omega$, 
and we let it run until it (approximately) reaches the steady state.  
We use $J=1000$ and $k_1 = 1$. 
The second PBD algorithm (B1)--(B5) is run with the same 
parameters presented in Section~\ref{sec:probation} for time 
$T_{\mathrm{final}} = 20$ which is (approximately) a time at 
which the model settles into the steady state. 
The reaction \eqref{eq:reactionsdeg} is simulated in a time-driven 
manner in $\Omega_B$, which means that for each 
morphogen molecule it is decided randomly at the end 
of each time step whether it was degraded or not
(the probability of degradation of each molecule is equal 
to $k_d \, \Delta t$ provided that $k_d \, \Delta t \ll 1$).

The result of a single simulation of the PBD algorithm (B1)--(B5)
is plotted in Figure~\ref{fig:exp1:single}. We plot the
PDE solution in $\Omega_P \setminus O$ as a black line and
the (gray) histogram of molecules in $\Omega_B \setminus O$. 
In the overlap region $O$, we compute the total ``number" of
particles by
\begin{equation}
N_O(T_{\mathrm{final}})
\equiv
\int_O p(x,T_{\mathrm{final}}) \dx 
+ 
\big| 
\left\{ x_i(T_{\mathrm{final}})\in O\;, 
i=1,2, \ldots, N_B(T_{\mathrm{final}})\right\}
\big|.
\label{numberoverlap}
\end{equation}
The value of $N_O(T_{\mathrm{final}})$ is plotted as the green bar
in Figure~\ref{fig:exp1:single}. The results of a single simulation 
of the PBD algorithm (B1)--(B5) are compared with the exact solution,
which is obtained by solving the PDE (\ref{eq:exp1:full}) numerically
until time $T_{\mathrm{final}}$. In this simple example, it is also 
possible to find an analytical expression for the steady state
profile which is approximately equal to the presented dashed line.

Note that the jagged appearance of the continuum solution close to the
interface is not numerical error, but represents the fact that as
molecules cross from the discrete to the continuum side information
about their exact location is lost gradually over time (remember that
Figure~\ref{fig:exp1:single} shows just one realisation of the
stochastic process). The 
corresponding distribution on the discrete side $\Omega_B \setminus \Omega_P$
would be $\delta$ function spikes at the location of the particles, which we
have in effect locally averaged by our binning process.
Thus the jaggedness can be seen as a gradual transition in the
solution from isolated discrete particles to a continuum density distribution. 
An ensemble average over 100 simulations of the PBD algorithm 
(B1)--(B5) is shown in Figure~\ref{fig:exp1:avg100}. The stochastic 
fluctuations are reduced compared to the single simulation and the 
results compare well with the exact solution for the expected
probability density (\ref{eq:exp1:full}). 
\begin{figure}
\center
\subfigure[Single simulation]{
\epsfig{file=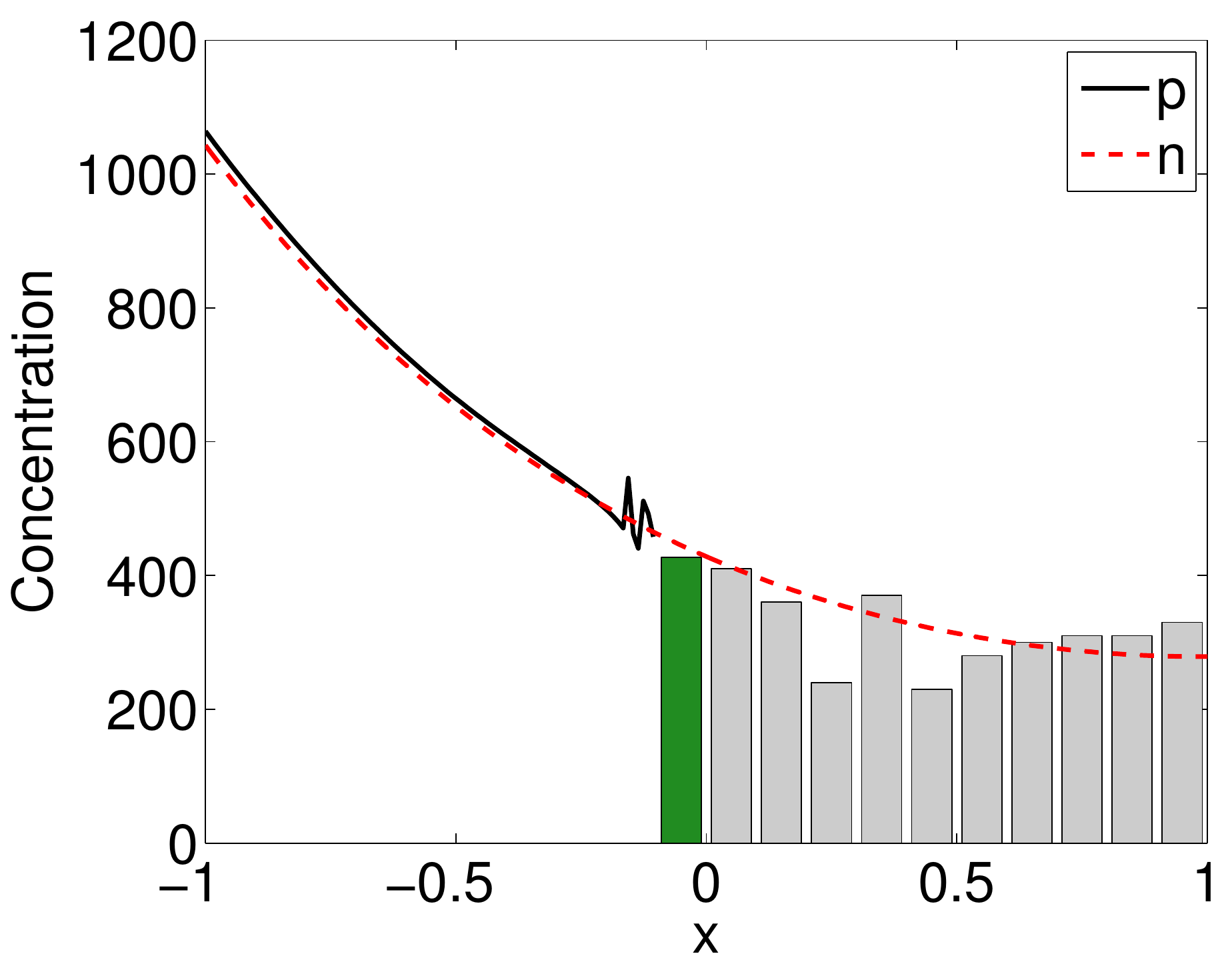, width=0.45\textwidth}
\label{fig:exp1:single}
}
\subfigure[Average over 100 realisations]{
\epsfig{file=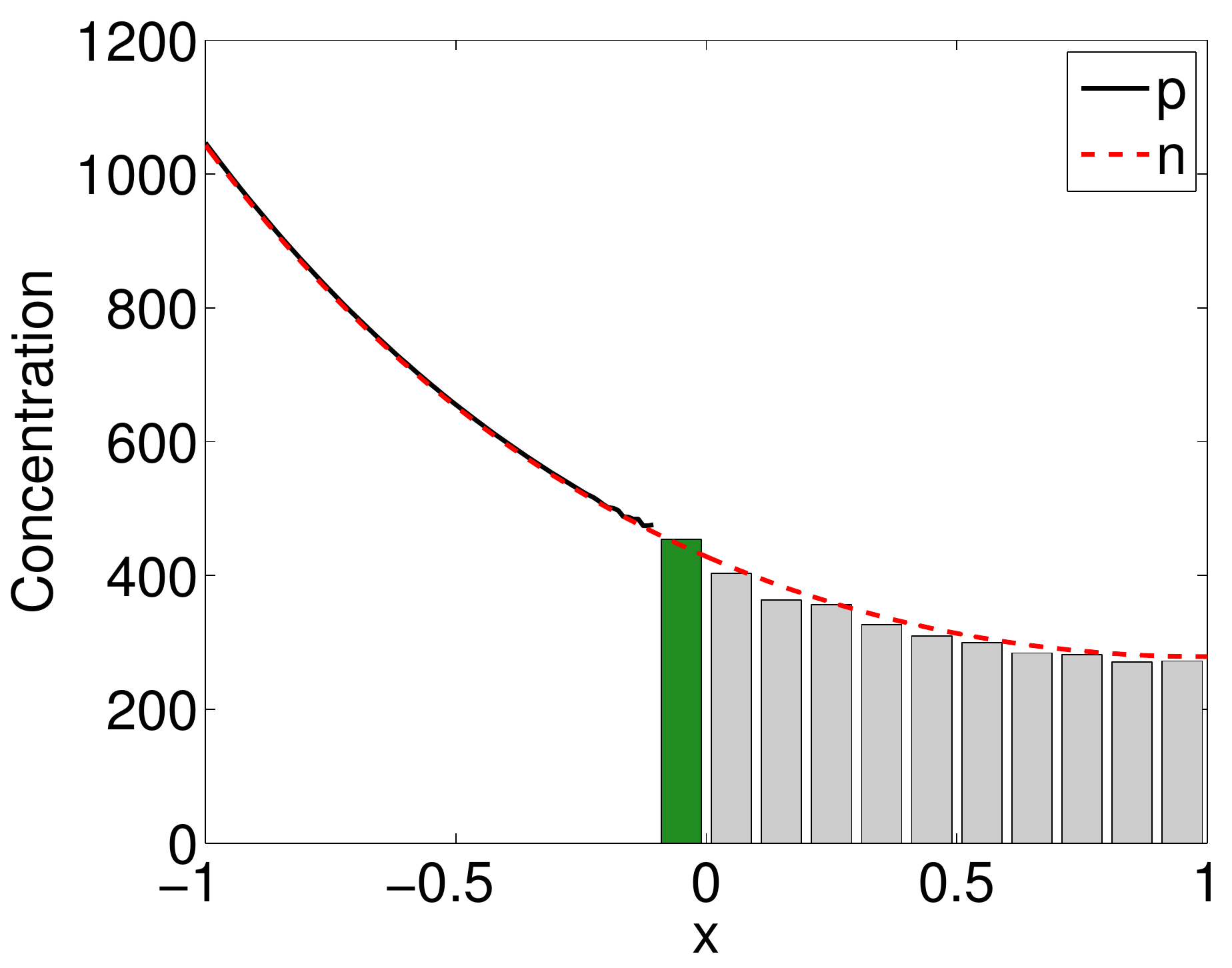, width=0.45\textwidth}
\label{fig:exp1:avg100}
}
\caption{Simulation results for Example 1. 
Dashed (red) line: exact solution given by $(\ref{eq:exp1:full})$; 
solid line: $p(x, T_{\mathrm{final}})$; 
(gray) bars: spatial concentration of particles
at $t=T_{\mathrm{final}}$ in $\Omega_B \setminus  O$; 
(green) bar: $N_O(T_{\mathrm{final}})$ given by $(\ref{numberoverlap})$.
Parameters as described in the text.}
\label{fig:exp1}
\end{figure}        

\subsection{Example 2: reversed morphogen gradient}
In this example we introduce a second reaction in addition to 
\eqref{eq:reactionsdeg} -- a local production of molecules:
\begin{equation} \label{eq:creation}
\emptyset \ \stackrel{k_p}{\longrightarrow} \ A \qquad \mbox{in } [x_s, 1]\,,
\end{equation}
where $x_s$ defines the size of the creation zone. As before
we consider all parameters to be  dimensionless: $k_p$ is defined as the
rate of production per unit length. For this system we use no 
flux boundary conditions on both ends. The combination of
localized production (\ref{eq:creation}) and degradation
(\ref{eq:reactionsdeg}) ensures that the system settles into 
a non-trivial steady-state which we will compute with the
PBD algorithm (B1)--(B5). The exact solution (which the
PBD algorithm (B1)--(B5) approximates) can be described
by 
\begin{equation} \label{eq:exp2:full}
\frac{\partial n}{\partial t} = D \frac{\partial^2 n}{\partial x^2} - k_1 n + k_2\chi_{[x_s, 1]} \,,\qquad
\frac{\partial n}{\partial x}(-1, t) = 0\,, \qquad
\frac{\partial n}{\partial x}(1, t) = 0\,,
\end{equation}
where $\chi_{[x_s, 1]}$ is the characteristic function for the
interval $[x_s, 1]$ that takes the value $1$ inside and $0$
outside of the interval. The production reaction \eqref{eq:creation}
was implemented in BD simulations in an event-driven way, such 
that particles can
get created at any time in-between time steps and the number of
particles created in one time step is not limited. We used $k_p =
1$, $k_d = 2000$ and $x_s = 0.5$. For the PBD simulations 
we use the same parameters as in Section~\ref{sec:probation}.
In particular, we have $\Omega_P= (-1, 0)$, $O = (-0.1, 0)$ 
and $\Omega_B = (-0.1, 1)$.

A single realisation of this process is plotted in Figure~\ref{fig:exp2:single}.
We plot the PDE solution in $\Omega_P \setminus O$ as a black line.
The concentration of molecules in $\Omega_B \setminus O$ is visualized
as a (gray) histogram. In the overlap region $O$, we plot 
$N_O(T_{\mathrm{final}})$ given by (\ref{numberoverlap}). 
The concentration gradient is now reversed, as the creation of particles 
happens near the right-hand boundary. Again, one can clearly see 
the stochastic fluctuations in this plot which also have effects 
on the value of $p(x,t)$ far from the overlap region $O$. However, 
as we draw an ensemble average over 100 simulations in 
Figure~\ref{fig:exp2:avg100}, the results converge
towards the exact solution which is obtained
by solving the PDE (\ref{eq:exp2:full}). It is plotted 
as a (red) dashed line in Figure~\ref{fig:exp2}.
\begin{figure}
\center
\subfigure[Single simulation]{
\epsfig{file=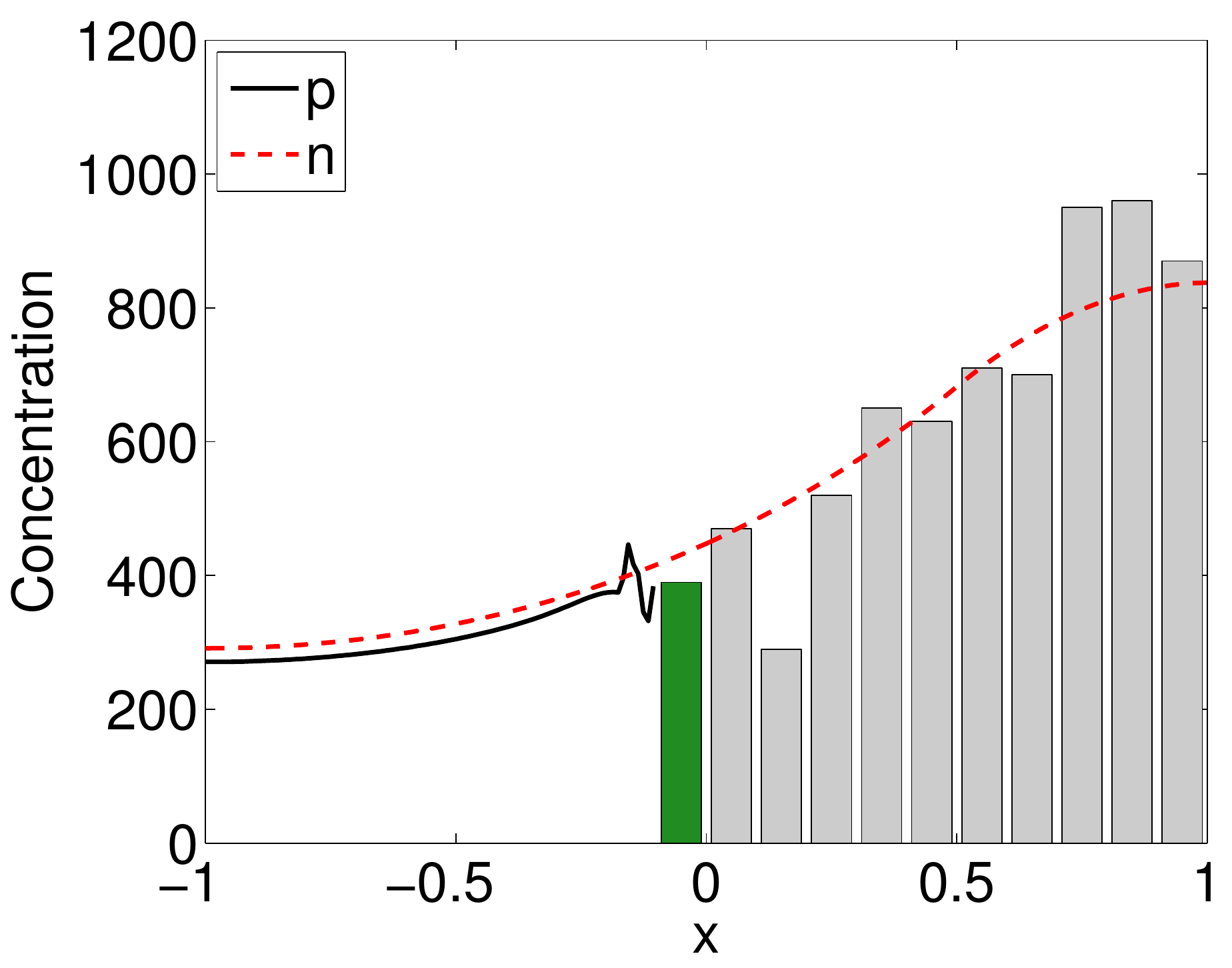, width=0.45\textwidth}
\label{fig:exp2:single}
}
\subfigure[Average over 100 simulations]{
\epsfig{file=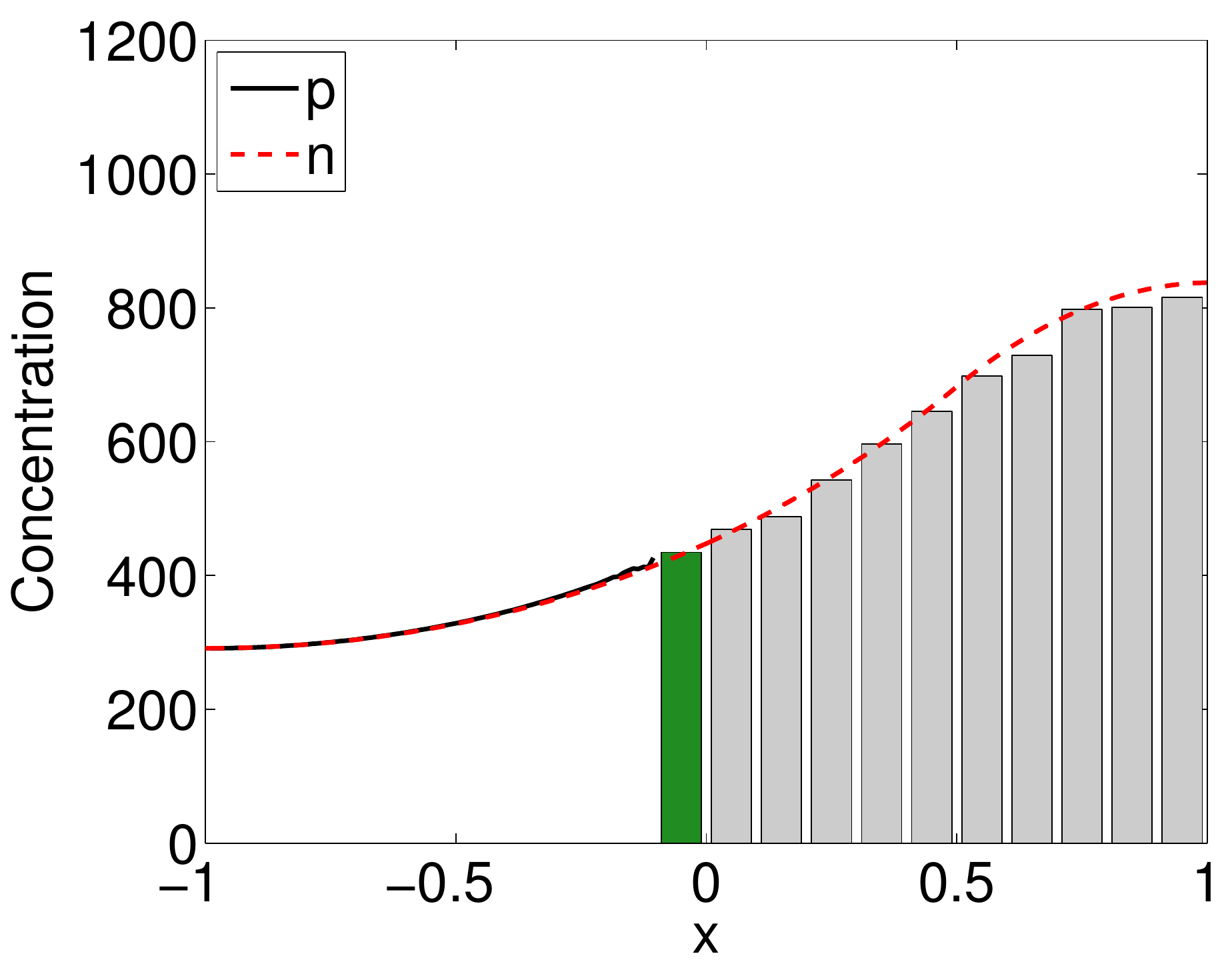, width=0.45\textwidth}
\label{fig:exp2:avg100}
}
\caption{Simulation results for Example 2. 
Dashed (red) line: exact solution given by $(\ref{eq:exp2:full})$;
solid line: $p(x, T_{\mathrm{final}})$ in $\Omega_P \setminus O$;
(gray) bars: spatial concentration of particles
at $t=T_{\mathrm{final}}$ in $\Omega_B \setminus  O$; 
(green) bar: $N_O(T_{\mathrm{final}})$ given by $(\ref{numberoverlap})$.
Parameters as described in the text.}
\label{fig:exp2}
\end{figure}        

\subsection{Example 3: chemisorption}
\label{secchemisorption}
Our last example is the polymer coating of a virus 
surface \cite{Fisher:2001:PCA,Erban:2007:DPI}.
We will describe it as irreversible adsorption (chemisorption)
of polymers to a two-dimensional surface as was introduced in
\cite{Erban:2007:TSR}. This example presents a typical application
area of PBD algorithms. A detailed model is used close to the
reactive boundary where positions of individual molecules influence 
the dynamics of diffusion-driven adsorption. On the other hand,
a less detailed model can be used far away from the adsorbing
surface. In the bulk the behaviour of reactive polymers
can be described by the macroscopic reaction-diffusion PDE 
(\ref{eq:reactdiff}) in the form
\begin{equation} \label{eq:RDPDE:degradation}
\frac{\partial p}{\partial t} 
= 
D \frac{\partial^2 p}{\partial z^2} - k_d \, p\,.
\end{equation}
in the semi-infinite domain $\Omega = (0, \infty)$ (here, $z$ is the 
distance from the reactive surface which is at $z=0$). This equation
takes into account two processes which mainly influence chemisorption
dynamics \cite{Erban:2007:TSR}: diffusion of polymer molecules and 
the hydrolysis of reactive groups in the solution. Both processes
can be implemented in the BD context as we saw in the previous
examples. However, this level of detail is only needed close to the
virus surface.

Whenever a polymer molecule interacts with the surface, it is either
reflected or (irreversibly) adsorbed. The chemisorption is modelled
by a random sequential adsorption (RSA) algorithm \cite{Erban:2007:CPS}: 
we check whether the corresponding binding site on the surface is free 
and then the reaction occurs with a certain probability. This probability 
is related to the reaction rate constant of the binding reaction as given
in \cite{Erban:2007:RBC}. The reader can find more details about
the model in \cite{Erban:2007:TSR}. In this paper, we show that
the PBD algorithms can be used to compute the results from 
\cite{Erban:2007:TSR}. We will use the same 
parameters, namely $D = 5\times 10^{-5}\, \mbox{mm}^2\, \mbox{s}^{-1}$,
$k_d  = 1.3\times 10^{-4}\, \mbox{s}^{-1}$ and $\Delta t = 0.01\, \mbox{s}$.
Then the mean displacement per time step according to
\eqref{eq:BD} is $\sqrt{2D \Delta t} = 10^{-3}\, \mbox{mm}$ and we
therefore choose the size of the overlap region of the PBD algorithm
(B1)--(B5) as $|O| = 10^{-2} \, \mbox{mm}$. 
From the results in \cite{Erban:2007:TSR},
we estimate that a maximum length of $L=2\, \mbox{mm}$ is enough 
to simulate the binding process and use the Dirichlet boundary condition
\begin{equation*}
n(L,t) = c_0 \exp(-k_d t)\,,
\end{equation*}
where $c_0 = 1.2\times 10^4\, \mbox{molecules}/\mbox{mm}$ is the initial
concentration of molecules (i.e. $n(z,0) \equiv c_0$ for $z \in \Omega$).
To apply the PBD algorithm (B1)--(B5), we choose
$\Omega_B = [0, 1.01)\, \mbox{mm}$, $O = (1, 1.01)\, \mbox{mm}$ and 
$\Omega_P = (1, 2) \, \mbox{mm}$. The RSA algorithm was performed using 
a nearest neighbour exclusion on a $100\times 100$ 
grid of receptor binding positions on the surface \cite{Erban:2007:CPS}.

In Figure~\ref{fig:exp4:c} we plot the concentration profile inside 
$\Omega_B$ of a single simulation at two different times (gray histograms). 
We compare the results of the PBD algorithm
(B1)--(B5) with the results of the RSA-PDE model presented in 
\cite{Erban:2007:TSR} (black lines). As shown in \cite{Erban:2007:TSR},
the RSA-PDE model also compares well with the full BD simulation. 
The number of molecules which are attached to the surface as a function
of time is plotted in Figure~\ref{fig:exp4:tbind} (six realisations
of the PBD algorithm (B1)--(B5) are plotted as green solid lines). Again, 
we see an excellent agreement with the result from \cite{Erban:2007:TSR}
which is plotted as the black dashed line.
\begin{figure}
        \center
        \subfigure[$T_{\mathrm{final}} = 20 min$]{
            \epsfig{file=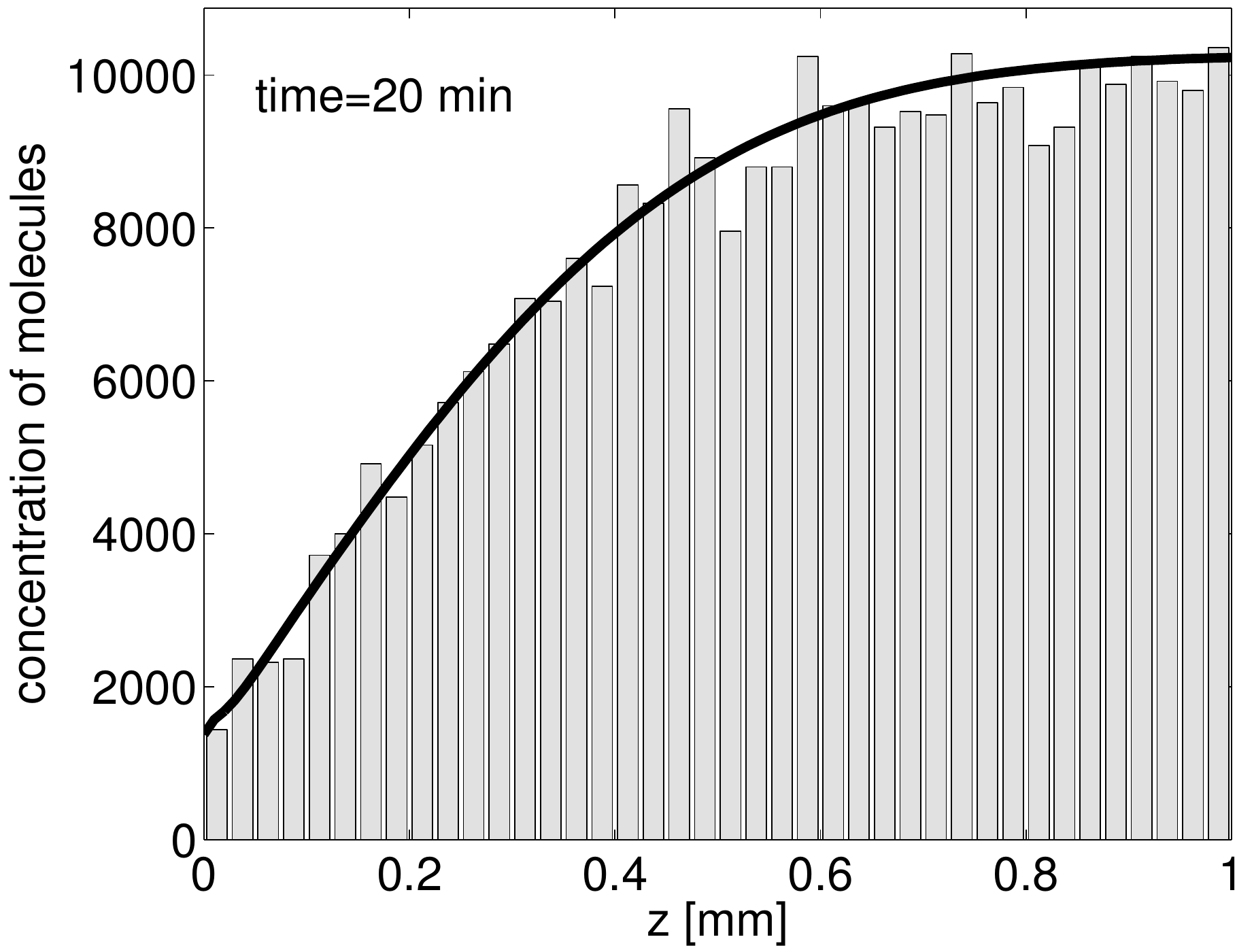, width=0.45\textwidth}
            \label{fig:exp4:20min}
        }
        \subfigure[$T_{\mathrm{final}} = 80 min$]{
            \epsfig{file=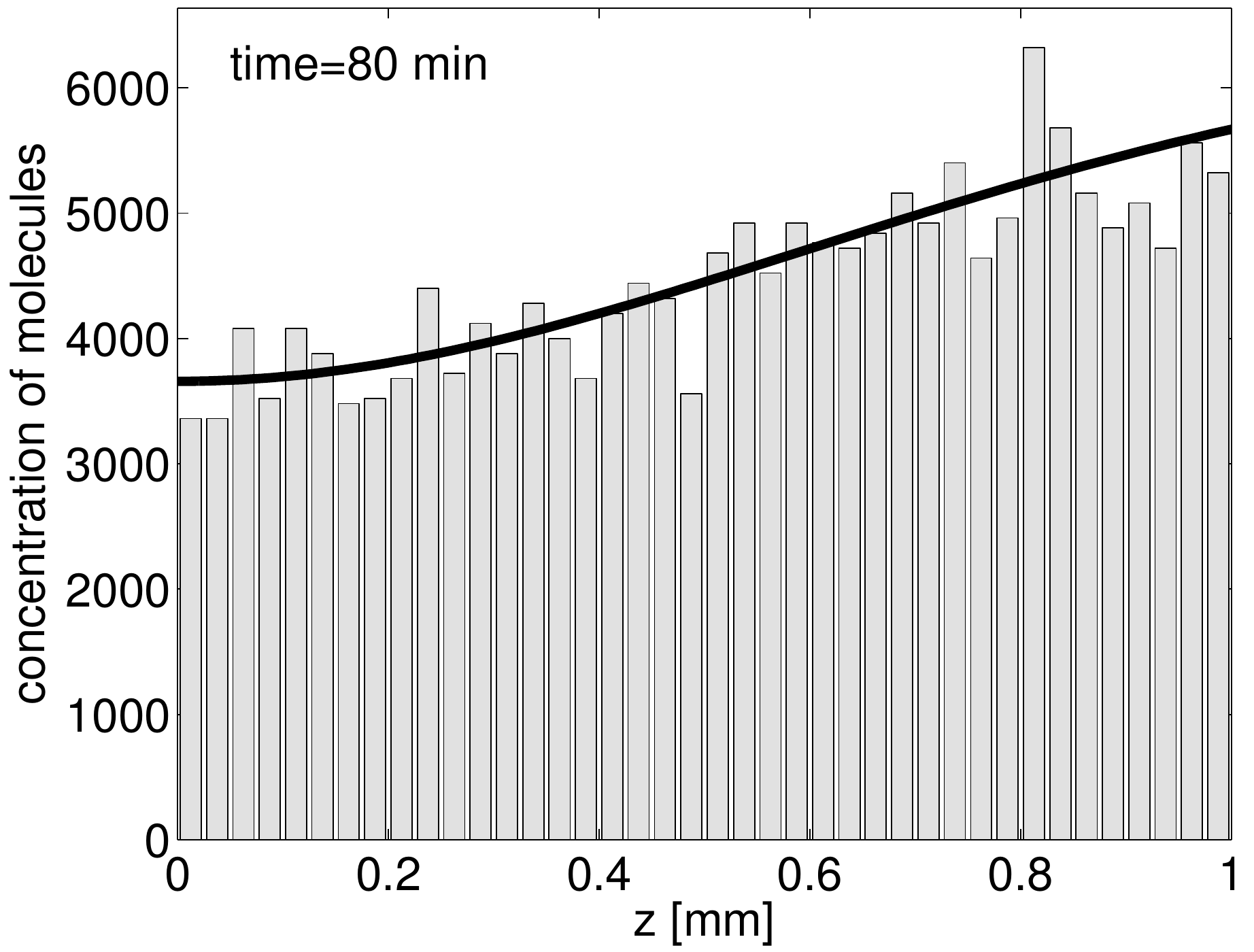, width=0.45\textwidth}
            \label{fig:exp4:80min}
        }
        \caption{Example 3 (chemisorption to virus surface). 
	(Gray) histograms:
	concentration profile in molecules/mm at two given times
	computed by the PBD algorithm (B1)--(B5);
	solid line: results of the
	RSA-PDE model presented in \cite{Erban:2007:TSR}.
	Parameters as shown in the text. }
    \label{fig:exp4:c}
\end{figure}

\begin{figure}
        \center
        \epsfig{file=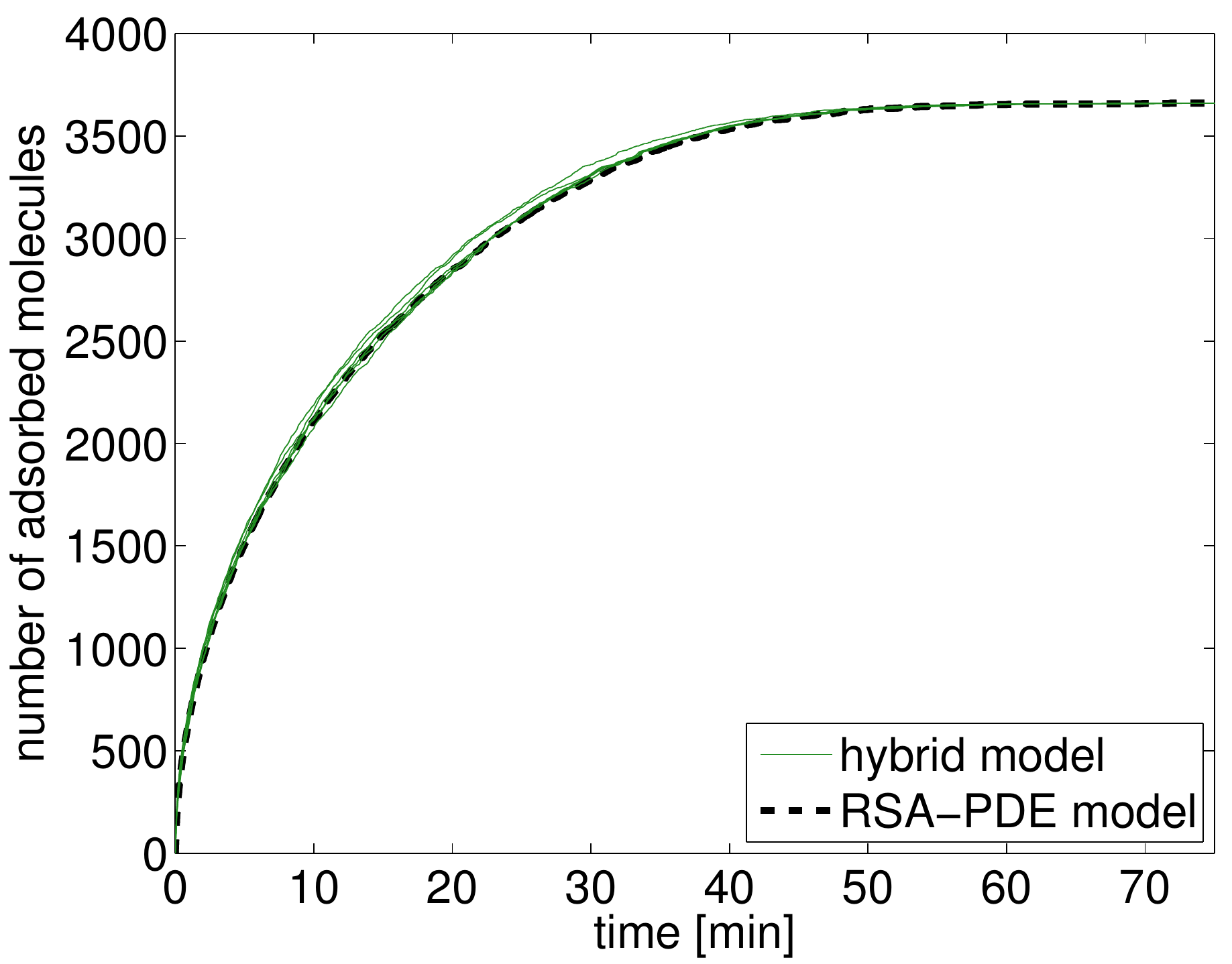, width=0.45\textwidth}
        \caption{Example 3 (chemisorption to virus surface).
	Number of polymer molecules which are bound to the virus 
	surface as a function of time.  
	(Green) solid lines: six realisations computed by the 
	PBD algorithm (B1)--(B5); (black) dashed line: results of the
	RSA-PDE model presented in \cite{Erban:2007:TSR}.
	Parameters as shown in the text.}
    \label{fig:exp4:tbind}
\end{figure}

\section{Discussion} \label{sec:discussion}
In this paper we have presented two PBD algorithms that combine 
Brownian dynamics with mean-field reaction-diffusion PDEs. 
This method produces exact Brownian dynamics simulations 
in one part of the domain and couples them with mean field 
approximations in another part of the domain. An algorithm 
of this type is useful for various application areas in
computational biology and beyond, for example, when a detailed
description of individual molecules is required near a receptor or
ion channel, but becomes impractical in the bulk of a cell
\cite{Corry:2000:TCT,Flegg:2012:DSN}; or when a detailed stochastic 
simulation of actin dynamics is required inside filopodia, but
becomes impractical in the bulk of a cell \cite{Zhuravlev:2009:MNC}.
Another application area, chemisorption, was discussed in Section
\ref{secchemisorption}. By using our approach it would also be 
possible to use finite-sized particles in the BD simulation and
couple these with the corresponding mean field results presented
in \cite{Bruna:2012:EVE}. 

In the literature several hybrid models have been developed in the
context of fluid dynamics, but they do not discuss issues that arise 
from the incorporation of chemical reactions. Alexander et 
al \cite{Alexander:2002:ARS} presents a hybrid model that uses virtual 
particles at the grid point nearest to the interface to calculate fluxes 
across the boundary and to generate accurate density fluctuations inside 
the particle region. Reference \cite{Wagner:2004:HCF} extends this 
approach by the introduction of an overlap region similar to $O$ 
introduced in Section~\ref{sec:probation}. An identical flux 
exchange with particles confined to a grid is presented in 
\cite{Flekkoy:2001:CPF}. Chemical reactions in the solution
were considered in \cite{Erban:2007:TSR}. This model was discussed
in Section \ref{secchemisorption}. It couples Brownian dynamics
of molecules in the solution with a more detailed description of the
adsorbing boundary. In \cite{Erban:2007:TSR} a hybrid (RSA-PDE) model 
has been developed which replaces BD in the solution by solving the PDE
(\ref{eq:RDPDE:degradation}) with a suitable stochastic boundary
condition. The PBD algorithms are able to replace the stochastic
boundary condition by a (small) BD region close to the surface. Although
the hybrid RSA-PDE model introduced in \cite{Erban:2007:TSR} was
sufficient in the case of (irreversible) adsorption, the situation
is becoming more challenging whenever the binding reaction is
reversible \cite{Lipkova:2011:ABD}. In this case, a molecule
which is released by the surface will initially stay close to the 
surface and can rebind to the same receptor (binding site). This 
geminate recombination can be captured by the PBD approach. Reversible 
reactions are common in biological applications 
\cite{Lipkova:2011:ABD,Flegg:2012:DSN}.

Hybrid approaches for reaction-diffusion processes which couple 
different modelling approaches have also been introduced in the
literature \cite{Moro:2004:HMS,Flegg:2012:TRM,Engblom:2009:SSR,Ferm:2010:AAS}. 
A mesoscopic lattice-based description coupled with macroscopic 
Fisher-Kolmogorov-Petrovsky-Piscounov PDE was
used in \cite{Moro:2004:HMS} to study front propagation
in a lattice-based reaction-diffusion model. A hybrid model for 
reaction-diffusion systems in porous media that combines 
pore-scale models with Darcy-scale models is presented 
in \cite{Tartakovsky:2008:HSR}. Flegg et al \cite{Flegg:2012:TRM}
introduced the so-called Two Regime Method which couples
a lattice-based (compartment-based) reaction-diffusion model 
with BD simulations. One advantage of the PBD algorithms over
the Two Regime Method is that 
there are more efficient tools for PDE simulations than for compartment-based
reaction-diffusion models. On the other hand, compartment-based
models provide more details (including fluctuations) and
hybrid models which couple BD simulations with compartment-based
models do not require the overlap region \cite{Flegg:2012:TRM,Ferm:2010:AAS}.
Since it is possible to couple the macroscopic PDE description
with (mesoscopic) compartment-based models \cite{Engblom:2009:SSR} 
and compartment-based models with (microscopic) BD simulations
\cite{Flegg:2012:TRM}, then an alternative approach to PBD algorithms
would be to use compartment-based models in the overlap region. That is,
the computational domain would be divided into three regions where
the PDE, compartment-based and BD descriptions would be used. These
three regimes would be coupled using the results from the literature
\cite{Engblom:2009:SSR,Flegg:2012:TRM}. Compartment-based models and 
macroscopic PDEs can also be coupled through another intermediate 
regime using a tau-leap method \cite{Ferm:2010:AAS}. In this paper,
we showed that PDE models and BD simulations can be coupled without
using intermediate compartment-based models.

The PBD algorithms presented in this paper should be seen as a first 
step towards a more general setting. Some parts of the algorithm 
(B1)--(B5) extend easily (at least theoretically) into higher 
dimensions, but in practice additionally difficulties are posed. 
One example is the necessity to sample from a multidimensional 
probability distribution to find the position of newly created molecules.
Additionally, in higher dimensions one can also expect to deal 
with higher order reactions, including bimolecular reactions. 
For a discussion of how to implement bimolecular reactions for 
BD simulations, we refer to \cite{Erban:2009:SMR}, but the real 
problem occurs inside the overlap region $O = \Omega_B \cap \Omega_P$, 
where a molecule could react with another molecule, or with the 
continuum. Since the reaction-diffusion PDEs are solved numerically
using a suitable mesh, it is important to study methods for coupling 
individual molecules with the numerical discretization of macroscopic 
PDEs \cite{Franz:2012:HMI}. For the ion channel application mentioned 
before, one also needs to think about how to incorporate 
electrical charges and resulting forces into the system. One can 
imagine that these forces act as a boundary condition on the 
continuum model and as an effective force on the particles.

\section*{Acknowledgments}
The research leading to these results has received funding from the
European Research Council under the \emph{European Community's} Seventh
Framework Programme \emph{(FP7/2007-2013)} / ERC \emph{grant agreement} No. 239870.
This publication was based on work supported in part by Award
No KUK-C1-013-04, made by King Abdullah University of Science and
Technology (KAUST). Radek Erban would also like to thank the 
Royal Society for a University
Research Fellowship; Brasenose College, University of Oxford, for a
Nicholas Kurti Junior Fellowship and the Leverhulme Trust for a Philip
Leverhulme Prize.

\bibliographystyle{siam}

\end{document}